\date{\today}
\newcommand{\Z}{{\mathbb Z}}
\newcommand{\R}{{\mathbb R}}
\newcommand{\C}{{\mathbb C}}
\newtheorem{theorem}{Theorem}
\newtheorem{lemma}{Lemma}
\newtheorem{prop}{Proposition}
\newtheorem{coro}{Corollary}
\def\e{\varepsilon}
\begin{document}
\title[Fractal Dimension of the Spectrum of the Fibonacci Hamiltonian]{The
Fractal
Dimension of the Spectrum of the Fibonacci Hamiltonian}

\author[D.\ Damanik]{David Damanik}

\address{Department of Mathematics, Rice University, Houston, TX~77005, USA}

\email{damanik@rice.edu}

\author[M.\ Embree]{Mark Embree}

\address{Computational and Applied Mathematics, Rice University, Houston,
TX~77005,
USA}

\email{embree@rice.edu}

\author[A.\ Gorodetski]{Anton Gorodetski}

\address{Mathematics 253-37, California Institute of Technology, Pasadena, CA
91125,
USA}

\email{asgor@caltech.edu}

\author[S.\ Tcheremchantsev]{Serguei Tcheremchantsev}

\address{Universit\'{e} d'Orl\'{e}ans, Laboratoire MAPMO,
CNRS-UMR 6628, B.P.~6759, F-45067
Orl\'{e}ans Cedex,
France}

\email{serguei.tcheremchantsev@univ-orleans.fr}

\thanks{D.\ D.\ was supported in part by NSF grant DMS--0653720.
M.\ E.\ was supported by NSF grant DMS--CAREER--0449973.}

\begin{abstract}
We study the spectrum of the Fibonacci Hamiltonian and prove upper
and lower bounds for its fractal dimension in the large coupling
regime. These bounds show that as $\lambda \to \infty$, $\dim
(\sigma(H_\lambda)) \cdot \log \lambda$ converges to an explicit
constant ($\approx 0.88137$). We also discuss consequences of these
results for the rate of propagation of a wavepacket that evolves
according to Schr\"odinger dynamics generated by the Fibonacci
Hamiltonian.
\end{abstract}

\maketitle

\section{Introduction}

The Fibonacci Hamiltonian is a discrete one-dimensional
Schr\"odinger operator
$$
[H u](n) = u(n+1) + u(n-1) + V(n) u(n)
$$
in $\ell^2(\Z)$. The potential $V : \Z \to \R$ is given by
\begin{equation}\label{fibpot}
V(n) = \lambda \chi_{[1-\phi^{-1},1)}(n \phi^{-1} \! +
\theta\!\!\! \mod 1),
\end{equation}
where $\lambda > 0$ is the coupling constant, $\phi$ is the golden
mean,
\begin{equation}\label{goldenratio}
\phi = \frac{\sqrt{5}+1}{2} = 1+ \cfrac{1}{1 + \cfrac{1}{1 +
\cfrac{1}{1+\cdots}}},
\end{equation}
and $\theta \in [0,1)$ is the phase.

This operator is important for both physical and mathematical
reasons. On the one hand, it is the most popular quantum model of
a one-dimensional quasicrystal, that is, a structure that shares
many features with one displaying global order, but which in fact
lacks global translation invariance. On the other hand, this
operator has zero-measure Cantor spectrum and all spectral
measures are purely singular continuous. These properties had been
regarded as ``exotic'' in the context of general Schr\"odinger
operators up until the 1980s, but for this operator family, they
occur persistently for all parameter values. The Fibonacci
Hamiltonian has been heavily studied since the early 1980s; see
\cite{D} for a recent review of the results obtained for it and
related models.

Let us recall some specific results and references that will be
important for what follows. It is known that the spectrum of the
Fibonacci Hamiltonian is independent of $\theta$ (see, e.g.,
\cite{bist}). This follows quickly from strong convergence once one
realizes that for each pair $\theta,\tilde \theta$, there is a
sequence $n_k \to \infty$ such that $\theta + n_k \phi^{-1}$
converges to $\tilde \theta$ in $\R / \Z$ ``from the right.'' We
denote this common spectrum by $\Sigma_\lambda$,
$$
\Sigma_\lambda = \sigma(H) \quad \text{ for every } \theta \in
[0,1).
$$

It is natural to study the spectrum as a set. As we shall see, such a
study is also motivated by the consequences one can draw for the
long-time behavior of the solution of the time-dependent
Schr\"odinger equation.

It has been shown by S\"ut\H{o} that the spectrum always has zero
Lebesgue measure \cite{su2},
\begin{equation}\label{zmspec}
\mathrm{Leb} ( \Sigma_\lambda ) = 0 \quad \text{ for every }
\lambda > 0.
\end{equation}
This immediately implies the absence of absolutely continuous
spectrum for all parameter values.\footnote{Historically, these
two properties were established in the reverse order. The methods
used by Kotani \cite{ko} in the proof of absence of absolutely
continuous spectrum (for almost every $\theta$) were the key to
proving zero measure spectrum.} It was later seen that one also
has the absence of point spectrum for all parameter values; see
S\"ut\H{o} \cite{su}, Hof-Knill-Simon \cite{hks}, and Kaminaga
\cite{ka} for partial results and Damanik-Lenz \cite{dl} for the
full result. Thus, the Fibonacci model exhibits purely singular
continuous spectrum that is very rigid in the sense that it is not
affected by a change of the defining parameters.

The result on zero measure spectrum, \eqref{zmspec}, naturally
leads one to ask about the dimension of this set. There are
several popular ways to measure the fractal dimension of a nowhere
dense subset of the real line.

Let us recall the definition of two of these dimensions. Suppose
we are given a bounded and infinite set $S \subseteq \R$. A
$\delta$-cover of $S$ is a countable union of real intervals, $\{
I_m \}_{m \ge 1}$, such that each of these intervals has length
bounded by $\delta > 0$. For $\alpha \in [0,1]$, let
$$
h^\alpha(S) = \lim_{\delta \to 0} \; \inf_{\delta\text{-covers}}
\; \sum_{m \ge 1} |I_m|^\alpha.
$$
It is clear that the limit exists in $[0,\infty]$. Moreover, if
$h^\alpha (S) = 0$ for some $\alpha$, then $h^{\alpha '}(S) = 0$
for every $\alpha ' > \alpha$. Similarly, if $h^\alpha (S) =
\infty$ for some $\alpha$, then $h^{\alpha '}(S) = \infty$ for
every $\alpha ' < \alpha$. Thus, the following quantity is
well-defined:
$$
\dim_H (S) = \inf \{\alpha : h^\alpha(S) < \infty \} = \sup
\{\alpha : h^\alpha(S) = \infty \}.
$$
The number $\dim_H (S) \in [0,1]$ is called the Hausdorff
dimension of the set $S$.

A different way to measure the fractal
dimension of $S$ is via the box counting dimension.
The lower box counting dimension is defined as follows:
$$
\dim_B^-(S) = \liminf_{\varepsilon \to 0} \frac{\log
N_S(\varepsilon)}{\log 1/\varepsilon},
$$
where
$$
N_S(\varepsilon) = \# \{ j \in \Z : [j\varepsilon,
(j+1)\varepsilon) \cap S \not= \emptyset \}.
$$
The upper box counting dimension, $\dim_B^+(S)$, is defined
similarly, with the $\liminf$ replaced by a $\limsup$. When
$\dim_B^+(S)$ and  $\dim_B^-(S)$ are equal, we denote their common
value by $\dim_B(S)$ and call this number the box counting
dimension of $S$. These dimensions are related by the inequalities
$$
\dim_H(S) \le \dim_B^-(S) \le \dim_B^+(S).
$$
In general, both inequalities may be strict; see, for example,
\cite[pp.~76--77]{matt}.

The main goal of this paper is to study the fractal dimension of
the spectrum of the Fibonacci Hamiltonian. The following result
shows that for sufficiently large coupling, the dimensions just
introduced coincide.

\begin{theorem}\label{dimeqthm}
Suppose that $\lambda \ge 16$.
Then the box counting dimension of $\Sigma_\lambda$ exists and obeys
$$
\dim_B (\Sigma_\lambda) = \dim_H (\Sigma_\lambda).
$$
\end{theorem}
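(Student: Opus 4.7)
The plan is to realise $\Sigma_\lambda$ as a transverse one-dimensional slice of a locally maximal hyperbolic invariant set of a smooth surface diffeomorphism, and then to invoke classical dimension theory for such sets, under which the Hausdorff and box counting dimensions of stable/unstable slices automatically coincide.

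First I would recall the Fibonacci trace map $T : \R^3 \to \R^3$, $T(x,y,z) = (2xy - z, x, y)$, together with its invariant surface
\[
S_\lambda = \{(x,y,z) \in \R^3 : x^2 + y^2 + z^2 - 2xyz - 1 = \lambda^2/4\}.
\]
Work of S\"ut\H{o} provides a smooth curve $\gamma : \R \to S_\lambda$, parametrised by the energy $E$, such that $E \in \Sigma_\lambda$ if and only if the forward $T$-orbit of $\gamma(E)$ stays bounded. In the regime $\lambda \ge 16$, a theorem of Casdagli asserts that the set $\Omega_\lambda \subset S_\lambda$ of points with bounded two-sided $T$-orbit is a locally maximal compact uniformly hyperbolic invariant set of $T|_{S_\lambda}$; in fact it is a horseshoe and coincides with the non-wandering set. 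Combining these two inputs gives $\Sigma_\lambda = \gamma^{-1}\bigl(W^s(\Omega_\lambda)\bigr)$, so, modulo the diffeomorphism $\gamma$, the spectrum is the intersection of the stable lamination of a surface horseshoe with a smooth curve.

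Next I would verify that $\gamma$ is transverse to the stable foliation $W^s(\Omega_\lambda)$ at every $E \in \Sigma_\lambda$. This is the step I expect to be the main obstacle. Although hyperbolicity for $\lambda \ge 16$ furnishes explicit invariant stable and unstable cone fields on $S_\lambda$, one still has to check that $\gamma'(E)$ lies in the unstable cone at every $\gamma(E) \in \Omega_\lambda$, using the concrete form of $\gamma$ coming from the transfer matrix recursion. A cone-field argument along the forward $T$-orbit of $\gamma(E)$, combined with the invariance of the unstable cone under $DT$, should force $\gamma'(E)$ out of the stable direction uniformly on $\Sigma_\lambda$.

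Once transversality is established, each point of $\Sigma_\lambda$ has a neighbourhood on which $\gamma$ is a bi-Lipschitz $C^1$ chart onto a piece of a stable slice of the horseshoe $\Omega_\lambda$. By the classical dimension theory of $C^{1+\alpha}$ horseshoes on surfaces (McCluskey--Manning, Takens, Palis--Viana), every such stable slice is a dynamically defined Cantor set whose Hausdorff, lower box, and upper box dimensions all coincide and equal the unique zero of an appropriate Bowen pressure equation. Because these three dimensions are preserved under bi-Lipschitz maps, a compactness argument together with pulling back through $\gamma$ yields $\dim_H(\Sigma_\lambda) = \dim_B^-(\Sigma_\lambda) = \dim_B^+(\Sigma_\lambda)$, so $\dim_B(\Sigma_\lambda)$ exists and equals $\dim_H(\Sigma_\lambda)$, as claimed.
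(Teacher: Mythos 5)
Your proposal is correct and follows essentially the same route as the paper's appendix: S\"ut\H{o}'s trace-map characterization of $\Sigma_\lambda$, Casdagli's hyperbolicity of $\Omega_\lambda$ for $\lambda \ge 16$, transversality of the line of initial conditions to the stable lamination (which the paper, like you, does not reprove but extracts from Casdagli's cone-field analysis), and the McCluskey--Manning/Takens/Palis--Viana coincidence of Hausdorff and box counting dimensions for slices of surface horseshoes, transported to the line by the $C^1$ stable holonomy. The only cosmetic differences are your half-trace normalization of the trace map and invariant surface, and that the Cantor set obtained by intersecting the stable lamination with a transversal is naturally identified with an \emph{unstable} slice $W^u_\varepsilon(x) \cap \Omega_\lambda$ rather than a stable one.
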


While this theorem has not appeared in print explicitly before, it
does follow quickly from a combination of known results. We
present the relevant facts in the appendix.

Theorem~\ref{dimeqthm} is useful because it will allow us to
obtain precise asymptotics for the fractal dimension of
$\Sigma_\lambda$ as $\lambda \to \infty$. The reason for this is
the following. The box counting dimension is easier to bound from
below, while the Hausdorff dimension is easier to bound from
above. Consequently, we will prove a lower bound for the box
counting dimension in Section~\ref{boxsec} and an upper bound for
the Hausdorff dimension in Section~\ref{haussec}.

It is known how to describe the spectrum of the Fibonacci
Hamiltonian in terms of the spectra of canonical periodic
approximants. We will recall this in Section~\ref{bandsec}. The
general theory of periodic discrete one-dimensional Schr\"odinger
operators shows that the spectrum of a such a periodic operator is
always given by a finite union of compact intervals. Our crucial
new insight is a way to describe the asymptotic distribution of
bandwidths in these periodic spectra. In this description, the
following function plays an important role. Define
\begin{eqnarray*}
   f(x) = \frac{1}{x} \left[\right. \kern-20pt&& (2-3x) \log 2 + (1-x) \log (1-x) \\[-.25em]
          && \kern10pt {}-(2x-1)\log (2x-1) - (2-3x) \log (2-3x) \left.\kern-2pt\right]
\end{eqnarray*}
on the interval $(\frac{1}{2},\frac{2}{3})$. Setting
$f(\frac{1}{2}) = \log 2$ and $f(\frac{2}{3}) = 0$, it is not hard
to see that $f$ extends to a continuous function on
$[\frac{1}{2},\frac{2}{3}]$, and with the aid of symbolic computation
one can confirm that it takes its maximum at the unique point
$$
x^* = {12-2\sqrt{2} \over 17} = 0.5395042867796\ldots,
$$
with
$$
f^* = f(x^*) = \log(1+\sqrt{2}) = 0.8813735870195\ldots.
$$
Write
$$
S_u(\lambda) = 2\lambda + 22
$$
and
$$
S_l(\lambda) = \frac{1}{2}
          \left((\lambda - 4) + \sqrt{(\lambda - 4)^2 - 12} \right).
$$
With these functions of (sufficiently large) $\lambda$ we can now
state the bounds on the fractal dimension of $\Sigma_\lambda$ that
we will prove in Sections~\ref{boxsec} and \ref{haussec},
respectively.

\begin{theorem}\label{mainthm}
{\rm (a)} Suppose $\lambda > 4$. Then
$$
\dim_B^-(\Sigma_\lambda) \ge \frac{f^*}{\log S_u(\lambda)}.
$$
{\rm (b)} Suppose $\lambda \ge 8$. Then
$$
\dim_H (\Sigma_\lambda) \le \frac{f^*}{\log S_l(\lambda)}.
$$
\end{theorem}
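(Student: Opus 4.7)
Both bounds will rest on the periodic approximant description of $\Sigma_\lambda$ that will be developed in Section~\ref{bandsec}: at level $k$, the approximant spectrum $\sigma_k$ is a disjoint union of $F_k$ bands (compact intervals), and via a trace-map analysis each band $B$ can be assigned a combinatorial ``type'' — roughly, an integer $m(B)$ counting how many times it survives a certain non-degenerate step of the Fibonacci trace dynamics. The point of the type is that it controls the band width from both sides: for $\lambda$ large enough, $c \cdot S_u(\lambda)^{-m(B)} \le |B| \le C \cdot S_l(\lambda)^{-m(B)}$ (modulo factors harmless for dimension), and the combinatorics forces $m(B) \in [k/2,2k/3]$ with the number of bands of type $\lfloor xk \rfloor$ growing like $\exp(xk \, f(x))$. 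The function $f(x)$ is precisely the entropy that one obtains by counting Fibonacci words of length $k$ with a prescribed number of ``doubling'' steps of the trace map; its maximum $f^*$ at $x^*$ will drive both halves.

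For part~(a), I would fix $x \in (\tfrac12,\tfrac23)$, set $m = \lfloor xk \rfloor$, and take $\varepsilon_k = C \cdot S_u(\lambda)^{-m}$. Each band of type $m$ contains a point of $\Sigma_\lambda$ (by interlacing of periodic spectra), and bands of distinct types are separated by gaps which, by the same width estimates, exceed $\varepsilon_k$ once $k$ is large. Counting gives $N_{\Sigma_\lambda}(\varepsilon_k) \ge \exp(xk \, f(x))$, hence
$$
\dim_B^-(\Sigma_\lambda) \ge \liminf_k \frac{\log N_{\Sigma_\lambda}(\varepsilon_k)}{\log 1/\varepsilon_k}
\ge \frac{f(x)}{\log S_u(\lambda)}.
$$
Optimizing over $x$ and using $f(x^*)=f^*$ yields the stated lower bound.

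For part~(b), I would cover $\Sigma_\lambda$ by the bands of $\sigma_k$ and compute $h^\alpha$ directly. Using the \emph{upper} width estimate $|B| \le C \cdot S_l(\lambda)^{-m(B)}$ and summing over types,
$$
\sum_{B \in \sigma_k} |B|^\alpha \le C^\alpha \sum_{m=k/2}^{2k/3} \#\{B : m(B)=m\} \cdot S_l(\lambda)^{-\alpha m} \le C' \sum_{m} \exp\!\bigl(m\, f(m/k) - \alpha m \log S_l(\lambda)\bigr).
$$
Because $\max_x f(x) = f^*$, whenever $\alpha > f^*/\log S_l(\lambda)$ every term in the sum decays exponentially in $k$ (uniformly in $m$), so the covers witness $h^\alpha(\Sigma_\lambda) < \infty$, giving the Hausdorff upper bound.

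The genuinely hard step, and the one I expect to require most of the work in Sections~\ref{boxsec} and \ref{haussec}, is the two-sided width control $c\,S_u(\lambda)^{-m(B)} \le |B| \le C\,S_l(\lambda)^{-m(B)}$ together with the entropy count $\#\{B:m(B)=m\} \asymp \exp(m\,f(m/k))$. These require a detailed study of the trace map orbits, tracking how a band of $\sigma_k$ descends to $\sigma_{k-1}$ or $\sigma_{k-2}$, extracting the explicit constants $S_u, S_l$ from the trace-map derivative, and recognizing the combinatorial counting as a constrained word-counting problem whose exponential growth rate is $f(x)$. Once these technical ingredients are in place, the dimension bounds themselves are short, as sketched above.
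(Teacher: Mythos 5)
Your approach mirrors the paper's almost exactly: both rest on the combinatorics of the bands $a_{k,m}$ with the entropy function $f$ from Section~\ref{bandsec}, on the two-sided width estimates from Lemmas~\ref{dtlemma} and \ref{kkllemma}, and on the same optimization in $x=m/k$ at $x^*$. The upper bound (b) is nearly identical to the paper's Theorem~\ref{thm5} — you should cover by $\sigma_k\cup\sigma_{k+1}$ rather than $\sigma_k$ alone, since $\Sigma_\lambda\subset\sigma_k\cup\sigma_{k+1}$ but $\Sigma_\lambda\not\subset\sigma_k$ in general, but the extra sum is handled identically.

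The one step you should not gloss over is the separation argument in (a). Your claim that ``bands of distinct types are separated by gaps which, by the same width estimates, exceed $\varepsilon_k$'' is not correct as stated: two consecutive type-$m$ bands of $\sigma_k$ may have nothing between them except a narrow gap, or only bands of type $m'>m$ whose length can be far below $\varepsilon_k$, so there is no a priori lower bound on the gap. The paper's fix is cheap and you need it: enumerate the $N_k$ type-$m_k$ bands left to right as $A_{3k,1},\dots,A_{3k,N_k}$, pick one spectral point in each, and keep only the odd-indexed ones. Then two kept points are separated by at least one discarded band $A_{3k,2s}$ of the \emph{same} type, whose length is $\ge\varepsilon_k$ by the $S_u$-bound, giving $N_{\Sigma_\lambda}(\varepsilon)\ge N_k/2$ for $\varepsilon<\varepsilon_k$. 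With that repair your proof of (a) is the paper's proof; everything else in your sketch, including the Stirling estimate $a_{k,m}\asymp k^{\pm 1/2}\exp(m f(m/k))$ and the interlacing/nesting argument that each band meets $\Sigma_\lambda$, matches Proposition~\ref{max-log} and Lemma~\ref{neint}.
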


Since both $S_u(\lambda)$ and $S_l(\lambda)$ behave asymptotically
like $\log \lambda$, we obtain the following result as an immediate
consequence. We write $\dim$ for either $\dim_H$ or $\dim_B$, which
is justified by Theorem~\ref{dimeqthm}.

\begin{coro}
We have
$$
\lim_{ \lambda \to \infty } \dim (\Sigma_\lambda) \cdot \log
\lambda = f^*.
$$
\end{coro}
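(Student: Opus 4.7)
The plan is a direct squeeze argument combining Theorem~\ref{dimeqthm} with the two halves of Theorem~\ref{mainthm}, so there is no significant new content to supply and no real obstacle---only bookkeeping on the asymptotics of $S_u$ and $S_l$.

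First I would restrict attention to $\lambda \ge 16$, where Theorem~\ref{dimeqthm} asserts that $\dim_H(\Sigma_\lambda) = \dim_B(\Sigma_\lambda)$; since $\dim_H(\Sigma_\lambda) \le \dim_B^-(\Sigma_\lambda) \le \dim_B^+(\Sigma_\lambda) = \dim_B(\Sigma_\lambda)$, this equality collapses all three dimensions and lets me write $\dim(\Sigma_\lambda)$ unambiguously. Parts (a) and (b) of Theorem~\ref{mainthm} then sandwich this common value:
$$
\frac{f^*}{\log S_u(\lambda)} \;\le\; \dim(\Sigma_\lambda) \;\le\; \frac{f^*}{\log S_l(\lambda)}.
$$

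Multiplying through by $\log \lambda$ yields
$$
f^* \cdot \frac{\log \lambda}{\log S_u(\lambda)} \;\le\; \dim(\Sigma_\lambda) \cdot \log \lambda \;\le\; f^* \cdot \frac{\log \lambda}{\log S_l(\lambda)},
$$
so the corollary reduces to verifying that $\log S_u(\lambda) / \log \lambda \to 1$ and $\log S_l(\lambda)/ \log \lambda \to 1$ as $\lambda \to \infty$. For $S_u(\lambda) = 2\lambda + 22$, the first ratio equals $1 + (\log 2 + \log(1 + 11/\lambda))/\log\lambda$, which tends to $1$. For $S_l(\lambda) = \tfrac{1}{2}\bigl((\lambda-4) + \sqrt{(\lambda-4)^2 - 12}\bigr)$, I would pull the factor $\lambda - 4$ out of the square root to write $S_l(\lambda) = (\lambda - 4) \cdot \tfrac{1}{2}(1 + \sqrt{1 - 12/(\lambda-4)^2})$; the bracketed factor tends to $1$, so $\log S_l(\lambda) = \log(\lambda - 4) + o(1) = \log \lambda + o(1)$, giving the second limit.

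Applying the squeeze theorem then produces $\lim_{\lambda \to \infty} \dim(\Sigma_\lambda) \cdot \log \lambda = f^*$, which is the corollary. The only piece that is not mechanical is the appeal to Theorem~\ref{dimeqthm} to conflate $\dim_H$ with $\dim_B^-$, since Theorem~\ref{mainthm}(a) produces only a lower bound on $\dim_B^-$ and Theorem~\ref{mainthm}(b) produces only an upper bound on $\dim_H$; without the coincidence of dimensions these bounds would not bracket a single quantity.
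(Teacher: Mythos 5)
Your proposal is correct and is exactly the argument the paper intends: the paper presents the corollary as an immediate consequence of Theorem~\ref{dimeqthm} (which justifies writing a single $\dim$) together with the two bounds of Theorem~\ref{mainthm} and the observation that $\log S_u(\lambda)$ and $\log S_l(\lambda)$ are both asymptotic to $\log\lambda$. Your write-up merely fills in the routine asymptotics that the paper leaves implicit, and your closing remark correctly identifies the one non-mechanical ingredient, namely the coincidence of dimensions needed to make the two one-sided bounds bracket a single quantity.
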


In particular, we see that the constant $f^*$ is the best possible
in both bounds in Theorem~\ref{mainthm}. Let us compare our results
with previously known ones. To facilitate this, we introduce
$$
f^\# = \frac{f^*}{\log \phi}
     = \frac{\log(1+\sqrt{2})}{\log \phi} = 1.8315709239073\ldots,
$$
so that our results can be summarized as follows. We have for
$\lambda\ge 16$,
\begin{equation}\label{ourbounds}
f^\# \frac{\log \phi}{\log S_u(\lambda)} \le \dim_B (\Sigma_\lambda)
= \dim_H (\Sigma_\lambda) \le f^\# \frac{\log \phi}{\log
S_l(\lambda) },
\end{equation}
and therefore the asymptotic behavior is
$$
\dim(\Sigma_\lambda) \sim f^\# \, \frac{\log \phi}{\log \lambda}.
$$

As we will see below, there are two competing scaling processes,
one scaling with $\phi$ (the Fibonacci numbers) and one scaling
with $\lambda$ (the inverse of the width of a band in the
approximating periodic spectra). Thus, it is natural to write a
bound in the form ``constant times $\frac{\log \phi}{\log
\lambda}$'' and then to optimize the constant.

Raymond \cite{r} proved an upper bound for $\dim_H
(\Sigma_\lambda)$ that has a $2$ in place of our $f^\#$ in
\eqref{ourbounds}. A simplified version of our approach (which we
will comment on later in the paper) quickly gives a lower bound
with $f^\#$ replaced in \eqref{ourbounds} by $1.5$ and an upper
bound with $f^\#$ replaced in \eqref{ourbounds} by $2$; the latter
being Raymond's result. These numbers appear naturally in this
context and are associated with the support of a certain discrete
probability distribution. A more detailed study of this
distribution then led us to the discovery of $f^\#$, which
describes the actual asymptotic behavior of the fractal dimension
of the spectrum as we saw above.

Lower bounds for the dimension of the spectrum were initially
obtained as a consequence of certain continuity properties of the
spectral measures with respect to certain Hausdorff measures.
Since the spectral measures are supported on the spectrum, one can
obtain a lower bound for the Hausdorff dimension of the spectrum
in this way. As mentioned above, this also bounds the box counting
dimension from below by general principles. Spectral Hausdorff
continuity results for the Fibonacci Hamiltonian were shown in
\cite{d1,dkl,jl,kkl}. The best lower bound that has been obtained
in this way can be found in \cite{kkl} and it reads
$$
\dim_H(\Sigma_\lambda) \ge \frac{2\kappa}{\kappa +
\zeta(\lambda)},
$$
where\footnote{Notice that there is a typo in \cite{kkl}. They
have $\kappa = \log \frac{\sqrt{17}}{20 \log \phi}$,
a negative number!}
$$
\kappa = \frac{\log \left( \frac{\sqrt{17}}{4} \right)}{5 \log \phi}
\approx 0.0126
$$
and
$$
\zeta(\lambda) = \frac{6 \log \sqrt{5}}{\log \phi} \left( \log
\lambda + O(1) \right).
$$
Thus, for $\lambda$ large, this gives a lower bound for
$\dim_H(\Sigma_\lambda)$ as in \eqref{ourbounds}, but with $f^\#$
replaced by
\begin{equation}\label{kklbound}
\frac{2\frac{\log \left( \frac{\sqrt{17}}{4} \right)}{5 \log
\phi}}{6\log \sqrt{5}} \approx 0.00188.
\end{equation}

Liu and Wen \cite{lw} then extended the approach employed by
Raymond. They study the case of general frequencies.
Specialized to the Fibonacci case, their result shows that
for $\lambda > 20$,
$$
\frac{\log 2}{10 \log 2 + 3 \log \left( 4 (\lambda - 8) \right)}
\le \dim_H (\Sigma_\lambda) \le \frac{\log 3}{\log \left(
\frac{\lambda - 8}{3} \right) }.
$$
Let us discuss this result in the large coupling limit. Since
$$
\frac{\log 3}{\log \phi} \approx 2.28301 > 2,
$$
the upper bound does not improve Raymond's result. The lower bound
has a constant coefficient
$$
\frac{\log 2}{3 \log \phi} \approx 0.48013
$$
in front of $\frac{\log \phi}{\log \lambda}$, a significant
improvement over the result that can be extracted from \cite{kkl}.
Again, by our result, the optimal constant is $f^\# \approx
1.83157$.

\bigskip

Our interest in obtaining the optimal constant $f^\#$ does not
only stem from natural curiosity. An interesting and
mathematically challenging problem is to study the spreading of a
wavepacket in a quantum system in the case where the initial state
has a purely singular continuous spectral measure. This is the
case for every initial state from $\ell^2(\Z)$ for a system
governed by the Fibonacci Hamiltonian. One is often especially
interested in the spreading of a wavepacket that is initially
localized on just one site.

That is, with $H$ as above, we consider $\psi (t) = e^{-itH}
\delta_1$ and study its spreading via the time-averaged outside
probabilities
$$
P_r (N,T)= \sum_{n > N} \frac{2}{T} \int_0^\infty
e^{-\frac{2t}{T}} \left| \left\langle e^{-itH} \delta_1, \delta_n
\right\rangle \right|^2 \, dt
$$
and
$$
P_l (N,T)= \sum_{n < -N} \frac{2}{T} \int_0^\infty
e^{-\frac{2t}{T}} \left| \left\langle e^{-itH} \delta_1, \delta_n
\right\rangle \right|^2 \, dt.
$$
Let $P(N,T) = P_l (N,T) + P_r (N,T)$ and define
$$
S^-(\alpha) = - \liminf_{T \to \infty} \frac{\log P(T^\alpha -2,
T) }{\log T}
$$
and
$$
S^+(\alpha) = - \limsup_{T \to \infty} \frac{\log P(T^\alpha -2,
T) }{\log T}.
$$
For every $\alpha$, $0 \le S^+ (\alpha) \le S^- (\alpha) \le
\infty$. These numbers control the power decaying tails of the
wavepacket. In particular, the following critical exponents are of
interest:
$$
\alpha_u^\pm  = \sup \{ \alpha \ge 0  :  S^\pm (\alpha) < \infty
\}.
$$
One can interpret $\alpha_u^\pm$ as the rates of propagation of
the fastest (polynomially small) part of the wavepacket; compare
\cite{GKT}. In particular, if $\alpha > \alpha_u^+$, then
$P(T^\alpha, T)$ goes to $0$ faster than any inverse power of $T$,
and if $\alpha > \alpha_u^-$, then there is a sequence of times
$T_k \to \infty$ such that $P(T_k^\alpha, T_k)$ goes to $0$ faster
than any inverse power of $T_k$.

In Section~\ref{qdsec} we prove a result for general
Schr\"odinger operators on $\ell^2(\Z)$ that will imply the
following consequence for the Fibonacci Hamiltonian.

\begin{theorem}\label{thm3}
For every $\lambda > 0$ and every $\theta \in [0,1)$, we have that
$$
\alpha_u^\pm \ge \dim_B^\pm (\Sigma_\lambda).
$$
Consequently, for $\lambda > 4$ and every $\theta$, we have
$$
\alpha_u^\pm \ge \frac{f^*}{\log S_u(\lambda)}.
$$
\end{theorem}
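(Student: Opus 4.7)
The second inequality of Theorem \ref{thm3} is immediate from the first combined with Theorem \ref{mainthm}(a): for $\lambda > 4$ we have $\alpha_u^\pm \ge \dim_B^\pm(\Sigma_\lambda) \ge \dim_B^-(\Sigma_\lambda) \ge f^*/\log S_u(\lambda)$. So the substance lies in the inequality $\alpha_u^\pm \ge \dim_B^\pm(\Sigma_\lambda)$, which (as the introduction indicates) will be obtained as the specialization to $H = H_\lambda$ of a general statement valid for every bounded discrete Schr\"odinger operator $H$ on $\ell^2(\Z)$, namely $\alpha_u^\pm(H) \ge \dim_B^\pm(\sigma(H))$.

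My starting point for the general statement is the Parseval-type identity
$$a(n,T) := \frac{2}{T}\int_0^\infty e^{-2t/T}\bigl|\langle e^{-itH}\delta_1, \delta_n\rangle\bigr|^2\, dt \;=\; \frac{1}{\pi T}\int_\R \bigl|G(1,n; E+i/T)\bigr|^2\, dE,$$
with $G(m,n;z) := \langle(H-z)^{-1}\delta_m, \delta_n\rangle$ the Green's function. Since $\sum_n a(n,T) = 1$ and $P(N,T) = \sum_{|n|>N} a(n,T)$, the proof reduces to showing that for every $\alpha < d \le \dim_B^\pm(\sigma(H))$ one has $\sum_{|n|\le T^\alpha - 2} a(n,T) \to 0$ as $T \to \infty$ (along all $T$ for the ``$-$'' case, and along a suitable subsequence for the ``$+$'' case), which is stronger than $P(T^\alpha - 2,T) \ge T^{-C}$.

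The strategy is to split the energy integral $\int_\R$ in the identity above according to whether $E$ lies in the $\varepsilon$-neighborhood $U_\varepsilon$ of $\sigma(H)$ (with $\varepsilon = 1/T$) or in its complement. On $\R \setminus U_\varepsilon$ the Combes--Thomas estimate yields an exponentially decaying bound on $|G(1,n;E+i/T)|$ in $|n|$, forcing that region's contribution to $\sum_{|n|\le T^\alpha - 2} a(n,T)$ to decay rapidly. On $U_\varepsilon$ itself, the covering definition of the box-counting dimension allows one to relate the integration range to the covering number $N(1/T)$, which is where the exponent $d$ enters.

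The main obstacle will be the near-spectrum bound on the Green's function: the naive resolvent estimate $\sum_n |G(1,n;E+i/T)|^2 \le \|(H-E-i/T)^{-1}\delta_1\|^2 \le T^2$ is far too crude to pair profitably with the cover of $\sigma(H)$. The remedy I would use is the transfer-matrix representation of $G(1,n;z)$ combined with polynomial bounds on those transfer matrices at spectral energies, in the spirit of the Damanik--Tcheremchantsev approach to 1D quantum dynamics. Once these sharper Green's function bounds are in place, the energy integral over $U_\varepsilon$ is controlled by $N(1/T)$, and after the normalization $1/(\pi T)$ and the summation over $|n|\le T^\alpha - 2$ one obtains $\sum_{|n|\le T^\alpha - 2} a(n,T) \to 0$. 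Letting $\alpha \uparrow d$ completes the argument.
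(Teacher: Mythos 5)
Your reduction of the theorem to the general statement $\alpha_u^\pm(H)\ge\dim_B^\pm(\sigma(H))$ for bounded whole-line discrete Schr\"odinger operators matches the paper's strategy (Section~\ref{qdsec}), and the deduction of the second inequality from the first together with Theorem~\ref{mainthm}(a) is also correct. However, the core of your argument --- bounding $\sum_{|n|\le T^\alpha-2}a(n,T)$ from above so that it tends to $0$ --- has several genuine gaps.

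First, the target statement itself is too strong. You want to show $P(T^\alpha-2,T)\to 1$ for every $\alpha<\dim_B^\pm(\sigma(H))$, which would force \emph{every} transport exponent $\beta^\pm_{\delta_1}(p)$, including $\beta^\pm_{\delta_1}(0^+)$, to be at least $\dim_B^\pm(\sigma(H))$. What $\alpha_u^\pm\ge\dim_B^\pm$ actually requires is only that a power-law small piece of the wavepacket escape to scale $T^\alpha$. Quantum transport for operators of this type is intermittent: the bulk of the wavepacket can move at a strictly slower rate (governed by dimensions of the spectral measure $\mu$, not of the set $\sigma(H)$), while only a polynomially small tail travels at speed $\sim T^{\alpha_u}$. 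Nothing in the hypothesis (or in the polynomial transfer-matrix bound) forces the bulk to move at rate $\dim_B(\sigma(H))$, so your stronger claim is not what one should expect, and would not follow from the tools you invoke.

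Second, the off-spectrum piece of your decomposition fails. With $\varepsilon=1/T$, Combes--Thomas gives $|G(1,n;E+i/T)|\lesssim \varepsilon^{-1}e^{-c\varepsilon|n|}$ for $\mathrm{dist}(E,\sigma(H))\ge\varepsilon$, but for $|n|\le T^\alpha$ with $\alpha<1$ one has $\varepsilon|n|\le T^{\alpha-1}\to 0$, so the exponential is essentially $1$ and you recover only the trivial resolvent bound $|G|\lesssim T$. Thus the complement of $U_\varepsilon$ does \emph{not} give the rapidly decaying contribution you assert.

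Third, and most structurally, the transfer-matrix input points in the wrong direction for your strategy. A polynomial \emph{upper} bound $\|T(n,m;E)\|\le CN^\alpha$ yields (after Simon's perturbation lemma) \emph{lower} bounds on $|G(1,n;E+i/T)|$ and hence lower bounds on $a(n,T)$ and $P(N,T)$; it does not give the upper bounds on $|G(1,n;E+i/T)|$ near the spectrum that your plan needs. The paper uses the transfer-matrix hypothesis in exactly this lower-bound direction: via \cite[Prop.~2.1]{GKT} it produces a power-law lower bound $\mu([E-\e,E+\e])\ge D\e^L$ on the spectral measure for $E\in A$; this feeds into a lower bound on the generalized fractal dimensions $D_\mu^\pm(q)$ in terms of $\dim_B^\pm(A)$; and then \cite[Cor.~4.1]{T2} gives $\beta^\pm_{\delta_1}(p)\ge D_\mu^\pm(q)$, from which $\alpha_u^\pm\ge\dim_B^\pm(A)$ follows on letting $p\to\infty$. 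The passage from the half line to the whole line is handled separately, using the two-sided estimate $CT^{-3}I_r(N,T)\le P_r(N,T)\le Ce^{-cN}+CT^3I_r(N,T)$ from \cite{DT2} and the fact that the transfer matrices on $\Z_+$ agree for $H$ and the restricted half-line operator. If you want to recover the result along your lines, you would need to replace the ``escape to scale $T^\alpha$'' target by the weaker and correct lower bound on $P$, and to route the transfer-matrix information through the spectral measure and the generalized dimensions rather than through upper bounds on off-diagonal Green's functions.
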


To discuss this result in the large coupling limit, let us be
slightly inaccurate\footnote{The precise statement is that
$\liminf_{\lambda \to \infty} \alpha_u^\pm  \, \frac{\log
\lambda}{\log \phi} \ge f^\#$.}  and write
\begin{equation}\label{ourqdbound}
\alpha_u^\pm \ge f^\# \, \frac{\log \phi}{\log \lambda}.
\end{equation}
There are two main previous approaches to quantum dynamical lower
bounds for the Fibonacci model. The first is based on spectral
continuity and the papers \cite{d1,dkl,jl,kkl} contain results
obtained in this way. For $\theta = 0$, the best bound is contained
in \cite{kkl} and it has \eqref{ourqdbound} with $f^\# \approx
1.83157$ replaced by \eqref{kklbound}, that is, $\approx\kern-2pt 0.00188$. For
other values of $\theta$, the best bound can be found in \cite{dkl}
and the constant in this bound is even smaller.

The other approach is based on complex energy methods and the
Plancherel Theorem; see \cite{DT} (and also \cite{DT1} for a way
to combine the two approaches). For $\theta = 0$, the paper
\cite{DT} has \eqref{ourqdbound} with $f^\#$ replaced by
$1/6$. It is possible to treat general $\theta$ along the
same lines using \cite{dl2}, but the dynamical lower bound has a
somewhat smaller constant in the general case.

Thus, on the one hand, our result improves the constant from the
previously best value $1/6$ to $f^\# \approx 1.83157$ and,
on the other hand, this is the best one can do using the
method put forth in this paper.
We would like to mention that \cite{DT2} contains the following upper bound for
$\alpha_u^+$, which holds for
$\lambda \ge 8$,
$$
\alpha_u^+ \le 2  \frac{\log \phi}{\log S_l(\lambda)}.
$$
While we know that the dimension of the spectrum indeed behave like
$f^\# \,  \frac{\log \phi}{\log \lambda}$ in the large coupling
limit, we expect that the dynamical quantities $\alpha_u^\pm$ behaves
like $2 \, \frac{\log \phi}{\log \lambda}$ in the large coupling
limit. That is, we expect the following to hold,
$$
\lim_{\lambda \to \infty} \alpha_u^\pm \, \frac{\log \lambda}{\log
\phi} = 2.
$$
The reason for this is that the spreading of the fastest part of
the wavepacket is determined by the ``most continuous'' part of
the spectral measure and, in this case, by the region in the
spectrum that is ``the thickest.'' Since we will see that there is
indeed a small region that is thickest in a natural sense, the
factor $2$ will then appear naturally. A forthcoming publication,
\cite{DT3}, will deal with this issue using ideas and results from
\cite{DT2}.

\section{The Band-Gap Structure of the Approximating Periodic
Spectra}\label{bandsec}

In this section we describe the canonical coverings of
$\Sigma_\lambda$ by (unions of) periodic spectra and the
hierarchical structure of these sets. We use the combinatorics of
this description to derive detailed results about the distribution
of bandwidths in these spectra.

For $E \in \R$ and $\lambda > 0$, we define a sequence of numbers
$x_k = x_k(E,\lambda)$ as follows.
\begin{equation}\label{tracemap}
x_{-1} = 2 , \quad x_0 = E , \quad x_1 = E - \lambda , \quad
x_{k+1} = x_k x_{k-1} - x_{k-2} \quad \text{ for } k \ge 1.
\end{equation}
Using this recurrence and the initial values, one can quickly check
(see S\"ut\H{o} \cite{su}) that
\begin{equation}\label{invariant}
x_{k+1}^2 + x_k^2 + x_{k-1}^2 - x_{k+1} x_k x_{k-1} = 4 +
\lambda^2 \; \mbox{ for every } k \ge 0.
\end{equation}
For fixed $\lambda > 0$, define
$$
\sigma_k = \{ E \in \R : |x_k (E, \lambda)| \le 2\}.
$$
The set $\sigma_k$ is actually equal to the spectrum of the
Schr\"odinger operator $H$ whose potential $V_k$ results from $V$
(with $\theta = 0$) by replacing $\phi^{-1}$ with $F_{k-1}/F_k$
(cf.~\cite{su}). Here, $\{F_k\}_{k \ge 0}$ denotes the sequence of
Fibonacci numbers, that is,
$$
F_0 = 1, \; F_1 = 1, \; F_{k+1} = F_k + F_{k-1} \mbox{ for } k \ge
1.
$$
Hence, $V_k$ is periodic and $\sigma_k$ consists of $F_k$ bands
(closed intervals). S\"ut\H{o} also proved that
\begin{equation}\label{suto1} 
\left(\sigma_{k-1} \cup \sigma_k\right) \supset \left( \sigma_k \cup
\sigma_{k+1}\right)
\end{equation}
and
\begin{equation}\label{suto2} 
\Sigma_\lambda = \bigcap_{k \ge 1} (\sigma_k \cup \sigma_{k+1}) = \{
E : \{ x_k \} \text{ is a bounded sequence} \}.
\end{equation}

From now on, we assume
\begin{equation}\label{four}
\lambda > 4,
\end{equation}
since we will make critical use of the fact that in this case, it
follows from the invariant \eqref{invariant} that three
consecutive $x_k$'s cannot all be bounded in absolute value
by~$2$:
\begin{equation}\label{critical}
\sigma_k \cap \sigma_{k+1} \cap \sigma_{k+2} = \emptyset.
\end{equation}

The identity \eqref{critical} is the basis for work done by
Raymond \cite{r}; see also \cite{DT2,kkl}, which describe the
band structure on the various levels in an inductive way. Let us
recall this result. Following \cite{kkl}, we call a band $I_k
\subset \sigma_k$ a ``type~A band'' if $I_k \subset \sigma_{k-1}$
(and hence $I_k \cap (\sigma_{k+1} \cup \sigma_{k-2}) = \emptyset$).
We call a band $I_k \subset \sigma_k$ a ``type~B band'' if $I_k \subset
\sigma_{k-2}$ (and therefore $I_k \cap \sigma_{k-1} = \emptyset$).
Then, we have the following result (Lemma~5.3 of \cite{kkl},
essentially Lemma~6.1 of \cite{r}):

\begin{lemma}\label{order}
For every $\lambda > 4$ and every $k \ge 1$, \\[1mm]
{\rm (a)} Every type A band $I_k \subset \sigma_k$ contains
exactly one type B band $I_{k+2} \subset \sigma_{k+2}$, and no
other bands from $\sigma_{k+1}$, $\sigma_{k+2}$. \\[1mm]
{\rm (b)} Every type B band $I_k \subset \sigma_k$ contains
exactly one type A band $I_{k+1} \subset \sigma_{k+1}$ and two
type B bands from $\sigma_{k+2}$, positioned around $I_{k+1}$.
\end{lemma}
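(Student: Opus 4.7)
The plan is to analyze the restrictions of $x_{k+1}$ and $x_{k+2}$ to a fixed band $I_k \subset \sigma_k$ using the recurrence \eqref{tracemap}, the invariant \eqref{invariant}, and the disjointness \eqref{critical} (which requires $\lambda > 4$). A standard fact for trace polynomials of periodic Jacobi matrices is that $x_k$ is strictly monotonic on each of its bands and maps that band bijectively onto $[-2,2]$; I would use this to parameterize $I_k$ by $t = x_k \in [-2,2]$, with the two endpoints of $I_k$ corresponding to $t = \pm 2$ and a unique interior point corresponding to $t = 0$.

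For part (a), with $I_k$ of type A, so $|x_{k-1}|, |x_k| \le 2$ on $I_k$: applying \eqref{critical} to the indices $(k-1,k,k+1)$ gives $I_k \cap \sigma_{k+1} = \emptyset$, so $|x_{k+1}| > 2$ throughout $I_k$ and no band of $\sigma_{k+1}$ sits in $I_k$. For $\sigma_{k+2}$-bands I use $x_{k+2} = x_{k+1} x_k - x_{k-1}$: at the two endpoints of $I_k$ (where $x_k = \pm 2$) one has $|x_{k+2}| \ge 2|x_{k+1}| - |x_{k-1}| > 2$, whereas at the interior point where $x_k = 0$, $x_{k+2} = -x_{k-1}$ and so $|x_{k+2}| \le 2$. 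An intermediate value argument then produces at least one maximal subinterval of $I_k$ on which $|x_{k+2}| \le 2$, and since this subinterval lies inside $\sigma_k$, it is a type B band of $\sigma_{k+2}$.

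For part (b), with $I_k$ of type B, so $|x_{k-2}|, |x_k| \le 2$ on $I_k$: the disjointness \eqref{critical} applied to $(k-2,k-1,k)$ instead forces $|x_{k-1}| > 2$ on $I_k$, while allowing $|x_{k+1}| \le 2$ on part of $I_k$. The analogous endpoint-versus-middle sign analysis, now applied to $x_{k+1} = x_k x_{k-1} - x_{k-2}$, locates at least one subinterval where $|x_{k+1}| \le 2$ (a type A band of $\sigma_{k+1}$), and then a further analysis of $x_{k+2}$ on the two subintervals of $I_k$ flanking this type A band produces the two type B bands of $\sigma_{k+2}$.

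The main obstacle is upgrading these existence statements to the claimed \emph{exact} counts (one, and one-plus-two, respectively) rather than just lower bounds. I would close this by an induction on $k$ using the global band counts: letting $a_k, b_k$ denote the numbers of type A and type B bands in $\sigma_k$, the local structure proposed in (a), (b) propagates to $a_{k+1} = b_k$ and $b_{k+2} = a_k + 2b_k$, and consistency with $a_k + b_k = F_k$ and the Fibonacci recurrence $F_{k+2} = F_{k+1} + F_k$ is possible only when the counts saturate the lower bounds produced above. A more local alternative is to use that $x_k$ is a bijection from $I_k$ onto $[-2,2]$ together with the invariant \eqref{invariant} to show that $x_{k+2}$, viewed as an algebraic function of $t = x_k$, crosses $\pm 2$ exactly the expected number of times. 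Base cases for small $k$ are checked by direct computation of the polynomials $x_k(E)$.
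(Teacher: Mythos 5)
The paper does not actually prove this lemma: it is quoted from \cite[Lemma~5.3]{kkl} (essentially \cite[Lemma~6.1]{r}), so there is no in-paper argument to compare against. Your overall strategy is the standard one from those references --- parameterize each band by the monotone trace $x_k$, evaluate $x_{k+1}$ and $x_{k+2}$ at the band edges ($x_k=\pm 2$) and at the interior zero of $x_k$, and use \eqref{critical} --- and your part (a) existence argument is correct as written. Your counting device for upgrading ``at least'' to ``exactly'' is also sound and does close: granting the lower bounds, $F_{k+2}=a_{k+2}+b_{k+2}\ge b_{k+1}+(a_k+2b_k)$, and with $a_j=F_{j-2}$, $b_j=F_{j-1}$ the right-hand side equals $F_{k+2}$, forcing equality band by band. (Two caveats there: you must first justify $a_j+b_j=F_j$, i.e.\ that every band of $\sigma_{j}$ is of exactly one type --- this follows from \eqref{suto1}, \eqref{critical}, and connectedness of a band, but it is an input to your count, not a free fact; and you should note that a type A/B band of level $j$, being connected and contained in $\sigma_{j-1}$ resp.\ $\sigma_{j-2}$, lies in a \emph{unique} band of the earlier spectrum.)

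The genuine gap is the existence of the \emph{two} type B bands of $\sigma_{k+2}$ inside a type B band $I_k$, which is an essential input: with only ``at least one'' per type B band the count gives $2F_k<F_{k+2}$ and nothing saturates. The part (a) estimate does not transfer, because on a type B band you only know $|x_{k-1}|>2$, a lower bound, so $|x_{k+2}|\ge 2|x_{k+1}|-|x_{k-1}|$ at the endpoints of $I_k$ proves nothing. One can repair the endpoint bound by eliminating $x_{k-1}$ via the recursion (at $x_k=2$ one gets $x_{k+2}=\tfrac12(3x_{k+1}-x_{k-2})$, hence $|x_{k+2}|>2$ there), and the invariant gives $|x_{k+2}|\ge\lambda-2>2$ at the endpoints of $I_{k+1}$ where $x_{k+1}=\pm2$. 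But then you know only that $|x_{k+2}|>2$ at \emph{both} ends of each flanking subinterval, and the intermediate value theorem gives an interior point with $|x_{k+2}|\le 2$ only if the \emph{signs} of $x_{k+2}$ at the two ends differ. Establishing those signs requires the invariant identities at band edges ($x_{k+1}=x_{k-1}\pm\lambda$ where $x_k=2$, $x_{k+2}=\pm x_k\pm\lambda$ where $x_{k+1}=\pm2$) together with an inductive bookkeeping of which sign occurs at which edge; this sign-tracking is the real content of Raymond's induction and is precisely what your phrase ``a further analysis of $x_{k+2}$ on the two subintervals'' elides. Until that is supplied, the exact counts --- and hence the lemma --- are not proved.
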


\begin{lemma}\label{neint}
For every band $I$ of $\sigma_k$, we have that $I \cap
\Sigma_\lambda \not= \emptyset$.
\end{lemma}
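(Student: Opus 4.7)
The plan is to use the characterization $\Sigma_\lambda = \bigcap_{k \ge 1}(\sigma_k \cup \sigma_{k+1})$ from \eqref{suto2}, together with a nested-interval construction driven by Lemma~\ref{order}. Given a band $I = I_k \subset \sigma_k$, I would first note that whether $I_k$ is a type A or a type B band, Lemma~\ref{order} guarantees that $I_k$ contains at least one band of $\sigma_{k+2}$ (exactly one type B band in case (a), two type B bands and one type A band in case (b)). Iterating this step, I would produce a nested sequence of closed intervals $I_k \supset I_{k+2} \supset I_{k+4} \supset \cdots$ with $I_{k+2j}$ a band of $\sigma_{k+2j}$ for every $j \ge 0$.

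By compactness, $\bigcap_{j \ge 0} I_{k+2j}$ is nonempty; pick any $E$ in this intersection. It remains to verify that $E \in \sigma_n \cup \sigma_{n+1}$ for every $n \ge 1$. When $n \ge k$, exactly one of $n$ and $n+1$ has the same parity as $k$ and is itself $\ge k$, so it equals $k+2j$ for some $j \ge 0$, whence $E \in \sigma_{k+2j} \subset \sigma_n \cup \sigma_{n+1}$. For $n < k$, iterating the inclusion \eqref{suto1} yields $\sigma_n \cup \sigma_{n+1} \supset \sigma_k \cup \sigma_{k+1}$, and since $E \in \sigma_k$ we again have $E \in \sigma_n \cup \sigma_{n+1}$. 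Intersecting over all $n$ then gives $E \in \Sigma_\lambda \cap I$.

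The only mild subtlety I anticipate is confirming that the hypothesis of Lemma~\ref{order} applies at every step — that is, that every band of every $\sigma_m$ with $m \ge 1$ admits the type A/type B dichotomy so that the descent can be iterated. This is implicit in the earlier setup (using $\sigma_{-1} = \R$ to start the classification at level $1$ and \eqref{suto1} to propagate it), and once that bookkeeping is in place the rest of the argument is just compactness plus the parity observation above. No quantitative estimate on band widths is needed for this lemma.
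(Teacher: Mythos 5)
Your proof is correct and takes essentially the same approach as the paper: use Lemma~\ref{order} to construct a nested sequence of bands inside $I$, take a point in the intersection by compactness, and conclude via \eqref{suto2}. The only cosmetic difference is that the paper descends by one or two levels at a time (choosing any band in $\sigma_{k+1}\cup\sigma_{k+2}$) and invokes boundedness of the trace map orbit, whereas you fix a descent by two levels and verify membership in $\bigcap_{n\ge 1}(\sigma_n\cup\sigma_{n+1})$ directly with a parity check and \eqref{suto1}.
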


\begin{proof}
Let $I$ be a band of $\sigma_k$. Choose a band $I^{(1)}$ in
$\sigma_{k+1} \cup \sigma_{k+2}$ with $I \supset I^{(1)}$,
as is possible by Lemma~\ref{order}.
Iterating this procedure, we obtain
a nested sequence of intervals and hence a point $E \in I$ for
which the corresponding trace map orbit is bounded. Thus, by
\eqref{suto2}, $E \in \Sigma_\lambda$.
\end{proof}

We define
\begin{align*}
a_k & = \text{number of bands of type A in } \sigma_k, \\
b_k & = \text{number of bands of type B in } \sigma_k.
\end{align*}

By Raymond's work, it follows immediately that $a_k + b_k = F_k$
for every $k$. In fact, we have the following result.

\begin{lemma}
The constants $\{a_k\}$ and $\{b_k\}$ obey the relations
\begin{equation}\label{abkrecursion}
a_k = b_{k-1} , \quad b_k = a_{k-2} + 2b_{k-2}
\end{equation}
with initial values
$$
a_0 = 1, \quad a_1 = 0, \quad b_0 = 0, \quad b_1 = 1.
$$
Consequently, for $k \ge 2$,
\begin{equation}\label{abkvalues}
a_k = b_{k-1} = F_{k-2}.
\end{equation}
\end{lemma}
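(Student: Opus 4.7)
The plan is to derive the recursion \eqref{abkrecursion} directly from Lemma~\ref{order}, and then obtain \eqref{abkvalues} by a quick induction. The initial values are read off from $\sigma_0 = [-2,2]$ and $\sigma_1 = [\lambda-2,\lambda+2]$, which under the convention that $\sigma_0$ counts as one type A band and $\sigma_1$ as one type B band, give $a_0 = 1$, $b_0 = 0$, $a_1 = 0$, $b_1 = 1$. (This convention is precisely what is needed for Lemma~\ref{order} to drive the recursion at the bottom of the hierarchy.)

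For the recursion itself, I would argue as follows. Every type A band of $\sigma_k$ is, by definition, contained in $\sigma_{k-1}$, and hence lies inside some band of $\sigma_{k-1}$. By Lemma~\ref{order}(a), a type A band of $\sigma_{k-1}$ contains no band of $\sigma_k$ at all; by Lemma~\ref{order}(b), a type B band of $\sigma_{k-1}$ contains exactly one type A band of $\sigma_k$. Summing over parent bands of $\sigma_{k-1}$ yields $a_k = b_{k-1}$. Similarly, every type B band of $\sigma_k$ lies inside a band of $\sigma_{k-2}$: Lemma~\ref{order}(a) contributes exactly one type B grandchild for each type A band of $\sigma_{k-2}$, while Lemma~\ref{order}(b) contributes exactly two type B grandchildren for each type B band of $\sigma_{k-2}$. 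Summing gives $b_k = a_{k-2} + 2b_{k-2}$. I should briefly note that these exhaust all bands of $\sigma_k$, so the counts really add up; this uses the fact (following from \eqref{suto1} applied at level $k-1$) that $\sigma_k \subseteq \sigma_{k-2} \cup \sigma_{k-1}$, so that each connected component of $\sigma_k$ is indeed of type A or type B.

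Given \eqref{abkrecursion}, the formula \eqref{abkvalues} follows by induction on $k$. The base cases $a_2 = b_1 = 1 = F_0$ and $a_3 = b_2 = a_0 + 2b_0 = 1 = F_1$ are immediate from the recursion and initial data. Assuming $a_j = F_{j-2}$ for all $2 \le j \le k$, I compute
\[
a_{k+1} = b_k = a_{k-2} + 2b_{k-2} = F_{k-4} + 2F_{k-3} = F_{k-3} + F_{k-2} = F_{k-1},
\]
using $b_{k-2} = a_{k-1} = F_{k-3}$ at the second-to-last step (valid for $k \ge 4$; the cases $k = 2,3$ are covered by the base). This completes the induction and establishes $a_k = b_{k-1} = F_{k-2}$ for $k \ge 2$. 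There is no real obstacle here beyond bookkeeping; the only point requiring minor care is the treatment of the initial conditions and the verification that the type A/type B dichotomy is exhaustive for bands of $\sigma_k$, which is where the hypothesis $\lambda > 4$ (through \eqref{critical} and \eqref{suto1}) quietly enters.
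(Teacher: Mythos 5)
Your proposal is correct and takes the same route as the paper, whose proof simply states that the recursions \eqref{abkrecursion} ``hold by definition'' (i.e., by the nesting structure of Lemma~\ref{order}) and that \eqref{abkvalues} follows by induction; you have just filled in the bookkeeping, including the exhaustiveness of the type A/B dichotomy. The only microscopic wrinkle is that your inductive step does not literally cover $a_4$ (the case $k=3$ would invoke $F_{-1}$), but $a_4 = b_3 = a_1 + 2b_1 = 2 = F_2$ is an immediate direct check.
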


\begin{proof}
The recursions \eqref{abkrecursion} hold by definition. The
explicit expressions in \eqref{abkvalues} then follow quickly by
induction.
\end{proof}

We are interested in the size of the bands of a given type on a
given level. To capture the distribution of these lengths, we
define
\begin{align*}
a_{k,m} & = \text{number of bands $b$ of type A in $\sigma_k$ with } \# \{ 0\le
j < k : b \cap \sigma_j \not= \emptyset \} = m, \\
b_{k,m} & = \text{number of bands $b$ of type B in $\sigma_k$ with
} \# \{ 0 \le j < k : b \cap \sigma_j \not= \emptyset \} = m.
\end{align*}
The motivation for this definition is the following: If we
consider a given band and its location relative to the bands on
the previous levels, each time the given band is contained in a
band on a previous level, we essentially pick up a factor roughly
of size $\lambda^{-1}$ (for $\lambda$ large). Thus, for example,
there are $a_{k,m}$ bands of size $\approx \lambda^{-m}$ of type
$A$ in $\sigma_k$. Since the combinatorics of the situation is
$\lambda$-independent, we choose to separate the two aspects. This
will allow us to get much more precise information in the regime
of large $\lambda$, which is extremely hard to study numerically.

\begin{lemma}
We have
\begin{equation}\label{abkmrecursion}
a_{k,m} = b_{k-1,m-1} , \quad b_{k,m} = a_{k-2,m-1} + 2b_{k-2,m-1}
\end{equation}
with initial values
\begin{align*}
a_{0,m} & = 0 \text{ for } m > 0 \quad \text{ and } \quad a_{0,0} = 1, \\
a_{1,m} & = 0 \text{ for } m \ge 0, \\
b_{0,m} & = 0 \text{ for } m \ge 0, \\
b_{1,m} & = 0 \text{ for } m > 0 \quad \text{ and } \quad b_{1,0}
= 1.
\end{align*}
Consequently,
\begin{equation}\label{abkmzero}
a_{k,m} = b_{k-1,m-1} = \begin{cases} 2^{2k - 3m - 1} \frac{m}{k-m}
{ k - m \choose 2m - k } & \text{ when } \lceil \tfrac{k}{2} \rceil
\le m \le \lfloor \tfrac{2k}{3} \rfloor; \\ 0 & \text{ otherwise.}
\end{cases}
\end{equation}
\end{lemma}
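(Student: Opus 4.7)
The plan has two parts: first I would establish the recurrences \eqref{abkmrecursion} with the stated initial values from the parent/child band correspondence of Lemma~\ref{order}, and then derive \eqref{abkmzero} from these recurrences by induction on $k$.

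The delicate ingredient in the first part is that $m$ is defined in terms of intersections of a band $b \in \sigma_k$ with \emph{all} earlier spectra $\sigma_j$, not only its immediate parent's level. I would therefore begin with a nesting lemma: if $b$ is a band of $\sigma_k$ and $c$ a band of $\sigma_j$ with $j \le k$ and $b \cap c \ne \emptyset$, then $b \subseteq c$. The proof is by induction on $k-j$. For $k-j = 0$ distinct bands of $\sigma_k$ are disjoint, for $k-j = 1$ the claim is immediate from the definition of type ($b$ of type A is contained in some band of $\sigma_{k-1}$, $b$ of type B is disjoint from $\sigma_{k-1}$), and for $k-j \ge 2$ one passes from $b$ to the unique band $b'$ of $\sigma_{k-1}$ (if $b$ is type A) or $\sigma_{k-2}$ (if $b$ is type B) containing it, notes that $b' \cap c \supseteq b \cap c \ne \emptyset$, and applies the inductive hypothesis to $b'$ and $c$. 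Once nesting is in place, \eqref{abkmrecursion} is immediate: for a type-A band $b \in \sigma_k$ with (necessarily type-B) parent $b' \in \sigma_{k-1}$, nesting gives $\{j < k : b \cap \sigma_j \ne \emptyset\} = \{k-1\} \cup \{j < k-1 : b' \cap \sigma_j \ne \emptyset\}$, yielding $a_{k,m} = b_{k-1,m-1}$; the type-B case is analogous, with $\{k-1\}$ replaced by $\{k-2\}$, together with the split from Lemma~\ref{order} into the type-A parent (multiplicity one) and the type-B parent (multiplicity two) in $\sigma_{k-2}$. The initial conditions are read off directly from the fact that $\sigma_0$ and $\sigma_1$ each consist of a single band.

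For \eqref{abkmzero}, after verifying a few small base cases, I would combine the two halves of \eqref{abkmrecursion} into the single recurrence $a_{k,m} = 2\,a_{k-2,m-1} + a_{k-3,m-2}$ and substitute the ansatz $a_{k,m} = 2^{2k-3m-1} c(k,m)$ with $c(k,m) = \frac{m}{k-m}\binom{k-m}{2m-k}$. The powers of $2$ cancel to leave $c(k,m) = c(k-2,m-1) + c(k-3,m-2)$. Changing variables to $n = k - m$, $r = 2m - k$ (so $m = n + r$), this becomes
\[
 \frac{n+r}{n}\binom{n}{r} = \frac{n+r-1}{n-1}\binom{n-1}{r} + \frac{n+r-2}{n-1}\binom{n-1}{r-1},
\]
which, after splitting $\binom{n}{r}$ by Pascal's rule and comparing the coefficients of $\binom{n-1}{r}$ and $\binom{n-1}{r-1}$, reduces to the absorption identity $r\binom{n-1}{r} = (n-r)\binom{n-1}{r-1}$.

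The main obstacle is the nesting lemma; once it is available, the counting that produces \eqref{abkmrecursion} is essentially by inspection and the binomial identity is a brief calculation. The only other place where care is needed is the propagation of the support condition $\lceil k/2 \rceil \le m \le \lfloor 2k/3 \rfloor$ through the recurrence: one must verify that the supports of $a_{k-2,m-1}$ and $a_{k-3,m-2}$ combine to give precisely the claimed support for $a_{k,m}$, a routine floor/ceiling check that I would dispatch alongside the inductive step.
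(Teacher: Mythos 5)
Your proposal is correct, and the two halves of it relate to the paper's proof differently. For the recursions \eqref{abkmrecursion} the paper simply asserts that they are ``immediate from the definition and \eqref{abkrecursion}''; your nesting lemma and the resulting decomposition $\{j<k: b\cap\sigma_j\ne\emptyset\}=\{k-1\}\cup\{j<k-1: b'\cap\sigma_j\ne\emptyset\}$ (and its type-B analogue) is exactly the justification being suppressed there, so on this point you are merely more explicit. For the closed form \eqref{abkmzero}, however, your route is genuinely different. The paper combines the recursions into $a_{k+1,m+1}=2a_{k-1,m}+a_{k-2,m-1}$, recognizes this as the recursion $c_{r,m+1}=2c_{r,m}+c_{r-1,m-1}$ satisfied by the coefficients of the Chebyshev polynomials of the first kind, matches initial data to conclude $a_{k,m}=c_{2m-k,m}$, and then imports the known closed form $c_{r,m}=2^{m-2r-1}\frac{m}{m-r}\binom{m-r}{r}$ from the literature. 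You instead take the closed form as an ansatz and verify the same combined recurrence directly; after the substitution $n=k-m$, $r=2m-k$ your identity
$$
\frac{n+r}{n}\binom{n}{r}=\frac{n+r-1}{n-1}\binom{n-1}{r}+\frac{n+r-2}{n-1}\binom{n-1}{r-1}
$$
does reduce, via Pascal's rule, to the absorption identity $r\binom{n-1}{r}=(n-r)\binom{n-1}{r-1}$, which is valid. Your argument is more self-contained and elementary (no external reference needed), while the paper's explains where the formula comes from rather than merely certifying it. The only points needing care in your version are the ones you already flag: the small cases with $n-1=0$ in the denominators (which occur only for $k\le 3$ within the support and must be absorbed into the base cases) and the floor/ceiling bookkeeping showing the support $\lceil k/2\rceil\le m\le\lfloor 2k/3\rfloor$ is reproduced by the recurrence.
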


\noindent\textit{Remark.} The fact that $a_{k,m}$ is zero when
$\lceil \tfrac{k}{2} \rceil \le m \le \lfloor \tfrac{2k}{3}
\rfloor$ fails is an immediate consequence of the properties
\eqref{suto1}, \eqref{suto2}, and \eqref{critical} established by
S\"ut\H{o}. As a slight variation of the proof of
Theorem~\ref{lowerboundthm} below shows, this fact alone is
sufficient to give a quick proof of
$$
\dim_B^\pm(\Sigma_\lambda) \ge 1.5 \, \frac{\log \phi}{\log
S_\mathrm{u}(\lambda)}.
$$
Our more detailed description of the numbers $a_{k,m}$ will then
enable us to prove the stronger lower bound with $f^\#$ in place of
$1.5$, which is optimal as discussed in the Introduction.

\begin{proof}
It is immediate from the definition and \eqref{abkrecursion} that
the recursions \eqref{abkmrecursion} hold. Recall that the Chebyshev
polynomials of the first kind are defined by the recurrence relation
\begin{align*}
T_0(x) & = 1, \\
T_1(x) & = x, \\
T_{m+1} (x) & = 2 x T_m(x) - T_{m-1} (x).
\end{align*}
Write $T_m(x)$ as
$$
T_m(x) = \sum_{r=0}^{\lfloor \frac{m}{2} \rfloor} (-1)^r c_{r,m}
x^{m-2r}.
$$
On the one hand, it is known (see, e.g., \cite{as}) that for
$m \ge 1$ and $0 \le r \le \lfloor \frac{m}{2} \rfloor$,
\begin{equation}\label{crmformula}
c_{r,m} = 2^{m-2r-1} \frac{m}{m-r} { m-r \choose r }.
\end{equation}
On the other hand, the recursion generating the polynomials says
that $c_{0,0} = 1$, $c_{0,1} = 1$, and
\begin{align*}
\sum_{r=0}^{\lfloor \frac{m+1}{2} \rfloor} (-1)^r c_{r,m+1} x^{m -
2r + 1} & = 2x \sum_{r=0}^{\lfloor \frac{m}{2} \rfloor} (-1)^r
c_{r,m} x^{m-2r} - \sum_{r=0}^{\lfloor \frac{m-1}{2} \rfloor}
(-1)^{r}
c_{r,m-1} x^{m - 2r - 1} \\
& = \sum_{r=0}^{\lfloor \frac{m}{2} \rfloor} (-1)^r 2 c_{r,m}
x^{m-2r+1} + \sum_{r=0}^{\lfloor \frac{m-1}{2} \rfloor} (-1)^{r+1}
c_{r,m-1} x^{m - 2r - 1} \\
& = \sum_{r=0}^{\lfloor \frac{m}{2} \rfloor} (-1)^r 2 c_{r,m}
x^{m-2r+1} + \sum_{r=1}^{\lfloor \frac{m+1}{2} \rfloor} (-1)^{r}
c_{r-1,m-1} x^{m - 2r + 1}.
\end{align*}
It follows that
\begin{equation}\label{crmrecursion}
c_{r,m+1} = 2 c_{r,m} + c_{r-1,m-1}.
\end{equation}
Denote $\tilde a_{k,m} = c_{2m-k,m}$. Then, using
\eqref{crmrecursion}, we find that
\begin{align*}
\tilde a_{k+1,m+1} & = c_{2(m+1) - (k+1),m+1} \\
& = c_{2m-k+1,m+1} \\
& = 2 c_{2m-k+1,m} + c_{2m-k,m-1} \\
& = 2 \tilde a_{k-1,m} + \tilde a_{k-2,m-1}.
\end{align*}
Comparing this with the recursion
$$
a_{k+1,m+1} =2a_{k-1,m} + a_{k-2,m-1}
$$
we established above, along with the initial values, we see that
\begin{equation}\label{acformula}
a_{k,m} = \tilde a_{k,m} = c_{2m-k,m}.
\end{equation}
Combining \eqref{crmformula} and \eqref{acformula}, we obtain
\eqref{abkmzero}.
\end{proof}

Recall that we introduced the function
\begin{eqnarray*}
   f(x) = \frac{1}{x} \left[\right. \kern-20pt&& (2-3x) \log 2 + (1-x) \log (1-x) \\[-.25em]
          && \kern10pt {}-(2x-1)\log (2x-1) - (2-3x) \log (2-3x) \left.\kern-2pt\right]
\end{eqnarray*}
on the interval $(\frac{1}{2},\frac{2}{3})$. We set
$f(\frac{1}{2}) = \log 2$ and $f(\frac{2}{3}) = 0$ and write
$$
x^* = {12 - 2\sqrt{2} \over 17} = 0.5395042867796\ldots
$$
for the unique point in $(\frac{1}{2},\frac{2}{3})$ where $f$
takes its maximum value,
$$
f^* = f(x^*) = \log(1+\sqrt{2}) = 0.8813735870195\ldots.
$$

\begin{prop}\label{max-log}
For $\frac{k}{2} \le m \le \frac{2k}{3}$, we have
\begin{equation}\label{akmest}
k^{-1/2} \exp\left( m f\left( \frac{m}{k} \right)\right) \lesssim
a_{k,m} \lesssim k^{1/2} \exp\left( m f\left( \frac{m}{k}
\right)\right).
\end{equation}
Consequently,
\begin{equation}\label{limmaxakm}
\lim_{k \to \infty} \max_{m} \frac{1}{m} \log a_{m,k} = f^*.
\end{equation}
\end{prop}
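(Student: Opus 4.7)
The plan is to feed the closed-form expression for $a_{k,m}$ from \eqref{abkmzero} into a version of Stirling's formula and collect terms. Writing $x = m/k$, set $n = k - m = k(1-x)$, $r = 2m - k = k(2x-1)$, and $n-r = 2k - 3m = k(2-3x)$. In the regime $\frac{k}{2} \le m \le \frac{2k}{3}$, we have $r, n-r \ge 0$, and standard two-sided Stirling bounds yield
\[
\frac{1}{C\sqrt{k}}\, \frac{n^n}{r^r (n-r)^{n-r}}
  \;\le\; \binom{n}{r} \;\le\;
\frac{C}{\sqrt{k}} \, \sqrt{\frac{n}{\max(r,1)\max(n-r,1)}}\,\frac{n^n}{r^r (n-r)^{n-r}},
\]
the $\sqrt{k}$ factors absorbing the usual $\sqrt{2\pi}$ together with the sub-exponential corrections. (At the two endpoints $m = \lceil k/2 \rceil$ and $m = \lfloor 2k/3 \rfloor$, where one of $r$, $n-r$ vanishes, the formula $a_{k,m}$ can be evaluated directly, and this trivially fits into the announced $k^{\pm 1/2}$ envelope.)

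Next I would take logarithms. The crucial observation is that
\[
(1-x) - (2x-1) - (2-3x) = 0,
\]
so the $\log k$ pieces of $n\log n$, $r\log r$, $(n-r)\log(n-r)$ cancel. Combined with $(2k-3m-1)\log 2 = k(2-3x)\log 2 + O(1)$ and $\log\frac{m}{k-m} = O(1)$ (since $\frac{1}{3} \le 1-x \le \frac12$), this leaves
\[
\log a_{k,m} \;=\; k\bigl[(2-3x)\log 2 + (1-x)\log(1-x) - (2x-1)\log(2x-1) - (2-3x)\log(2-3x)\bigr] + O(\log k).
\]
Factoring an $x$ out of the bracket identifies the expression inside the brackets with $x\, f(x)$, giving $\log a_{k,m} = m\, f(m/k) + O(\log k)$ uniformly for $m$ in the allowed range. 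Exponentiating produces \eqref{akmest}.

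For the consequence \eqref{limmaxakm}, I divide by $m$: the error becomes $O(\log k/m) = O(\log k/k) \to 0$ uniformly in $m \in [k/2, 2k/3]$, so
\[
\max_m \tfrac{1}{m}\log a_{k,m} \;=\; \max_m f(m/k) + o(1).
\]
Since $f$ is continuous on $[\tfrac12,\tfrac23]$, its maximum $f^*$ is attained at $x^*$, and one can choose integers $m_k$ with $m_k/k \to x^*$ inside the admissible range; continuity of $f$ then forces $\max_m f(m/k) \to f(x^*) = f^*$.

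The main obstacle I foresee is purely bookkeeping rather than conceptual: verifying that Stirling's error is genuinely $O(\log k)$ uniformly, including the behavior of the $\sqrt{n/(r(n-r))}$ factor as $m$ approaches an endpoint where one of the arguments vanishes. Handling those two boundary $m$-values by direct evaluation, and restricting Stirling to the interior, cleanly avoids this nuisance and yields the stated $k^{\pm 1/2}$ polynomial envelopes.
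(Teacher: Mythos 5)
Your argument follows the paper's proof essentially line for line: both apply Stirling's formula to the explicit binomial expression from \eqref{abkmzero}, isolate the factor $2^{2k-3m}\,(k-m)^{k-m}/[(2m-k)^{2m-k}(2k-3m)^{2k-3m}]$ and identify it as $\exp(m f(m/k))$, control the remaining polynomial factors by $k^{\pm 1/2}$ using the range restrictions $k/3<k-m<k/2$, $1\le 2m-k,2k-3m<k/2$, treat the two endpoint values of $m$ directly, and then pass to the limit via the continuity of $f$.

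One small caution: your displayed upper bound $\binom{n}{r}\le \frac{C}{\sqrt{k}}\sqrt{n/(\max(r,1)\max(n-r,1))}\cdot n^n/[r^r(n-r)^{n-r}]$ is not correct as stated --- Stirling gives this with the prefactor $\frac{1}{\sqrt{2\pi}}$ rather than $\frac{C}{\sqrt{k}}$, and the extra $k^{-1/2}$ makes the right-hand side too small once $k$ is large (for instance with $r=1$). The intended inequality is simply $\binom{n}{r}\lesssim \sqrt{n/(r(n-r))}\cdot n^n/[r^r(n-r)^{n-r}]$; then combining with the bounded factor $m/(k-m)$ and the inequality $\sqrt{n/(r(n-r))}\lesssim\sqrt{k}$ valid for $r,n-r\ge 1$ gives the claimed $k^{1/2}$ envelope, which is exactly how the paper proceeds.
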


\begin{proof}
As we saw above, when $\frac{k}{2} \le m \le \frac{2k}{3}$,
$$
a_{k,m} = 2^{2k - 3m - 1} \frac{m}{k-m} { k - m \choose 2m - k }.
$$
Thus, we see that $a_{k,\frac{k}{2}} = 2^{\frac{k}{2}-1}$ and
$a_{k,\frac{2k}{3}} = 1$, so the estimate~\eqref{akmest}
holds when $m = \frac{k}{2}$ or $m = \frac{2k}{3}$.

Let us now consider $\frac{k}{2} < m < \frac{2k}{3}$, in which case
\begin{align}
\label{est1} \frac{k}{3}  < & \ k-m < \frac{k}{2}, \\
\label{est2} 1 \le & \ 2m-k < \frac{k}{3}, \\
\label{est3} 1 \le & \ 2k-3m < \frac{k}{2}.
\end{align}
Stirling's approximation gives
$$
n! \asymp \sqrt{n} \left( \frac{n}{e} \right)^n
$$
for every $n \ge 1$, where we write $a\asymp b$ if $a \lesssim b$
and $a \gtrsim b$. Thus,
$$
a_{k,m} \asymp 2^{2k-3m} \frac{m}{k-m} \left(
\frac{k-m}{(2m-k)(2k-3m)} \right)^{1/2}
\frac{\left(\frac{k-m}{e}\right)^{k-m} }{ \left(\frac{2m-k}{e}
\right)^{2m-k} \left(\frac{2k-3m}{e} \right)^{2k-3m}}.
$$
Due to the estimates \eqref{est1}--\eqref{est3}, \eqref{akmest} will
follow once we show that
$$
\tilde a_{k,m} := 2^{2k-3m} \frac{\left(\frac{k-m}{e}\right)^{k-m}
}{ \left(\frac{2m-k}{e} \right)^{2m-k} \left(\frac{2k-3m}{e}
\right)^{2k-3m}} = \exp\left( m f\left( \frac{m}{k} \right)\right).
$$
Writing $x = m/k$, we find that
\begin{align*}
\tilde a_{k,m} & = 2^{2k-3m} \frac{\left(k-m\right)^{k-m} }{
\left(2m-k\right)^{2m-k} \left(2k-3m \right)^{2k-3m}} \\
& = 2^{2k-3kx} \frac{\left(k-kx\right)^{k-kx} }{
\left(2kx-k \right)^{2kx-k} \left(2k-3kx \right)^{2k-3kx}} \\
& = 2^{2k-3kx} \frac{\left(1-x\right)^{k(1-x)} }{ \left(2x-1
\right)^{k(2x-1)} \left(2-3x \right)^{k(2-3x)}} \\
& = \exp\left( k x f\left( x \right)\right) \\
& = \exp\left( m f\left( \frac{m}{k} \right)\right).
\end{align*}
This finishes the proof of \eqref{akmest}, which in turn immediately
implies
$$
\limsup_{k \to \infty} \max_{m} \frac{1}{m} \log a_{m,k} \le f^*.
$$
On the other hand, as $k$ gets large, we can choose $m$ so that
$m/k$ gets arbitrarily close to $x^*$, so that by
\eqref{akmest} again,
$$
\liminf_{k \to \infty} \max_{m} \frac{1}{m} \log a_{m,k} \ge f^*.
$$
Thus, we have established~\eqref{limmaxakm}.
\end{proof}

\section{A Lower Bound for the Box Counting Dimension of the
Spectrum}\label{boxsec}

In this section we will prove a lower bound for the (lower) box
counting dimension of the spectrum of the Fibonacci Hamiltonian.
Recall that the lower box counting dimension of a bounded set $S
\subset \R$ is defined as follows:
$$
\dim_B^-(S) = \liminf_{\varepsilon \to 0} \frac{\log
N_S(\varepsilon)}{\log 1/\varepsilon},
$$
where
$$
N_S(\varepsilon) = \# \{ j \in \Z : [j\varepsilon,
(j+1)\varepsilon) \cap S \not= \emptyset \}.
$$

The following was shown in \cite{DT}.
\begin{lemma}\label{dtlemma}
For every $\lambda > 4$, there exists $S_u(\lambda)$ such that the following
holds: \\[1mm]
{\rm (a)} Given any {\rm (}type A{\rm )} band $I_{k+1} \subset
\sigma_{k+1}$ lying in the band $I_k \subset \sigma_k$, we have
for every $E \in I_{k+1}$,
$$
\left| \frac{x_{k+1}'(E)}{x_k'(E)} \right| \le S_u(\lambda).
$$
{\rm (b)} Given any {\rm (}type B{\rm )} band $I_{k+2} \subset
\sigma_{k+2}$ lying in the band $I_k \subset \sigma_k$, we have
for every $E \in I_{k+2}$,
$$
\left| \frac{x_{k+2}'(E)}{x_k'(E)} \right| \le S_u(\lambda).
$$
For example, one can choose $S_u(\lambda) = 2\lambda + 22$.
\end{lemma}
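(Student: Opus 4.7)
The plan is to work directly with the differentiated trace recurrence. Differentiating $x_{k+1}=x_k x_{k-1}-x_{k-2}$ with respect to $E$ yields
$$\frac{x_{k+1}'}{x_k'} \;=\; x_{k-1} \;+\; x_k\,\frac{x_{k-1}'}{x_k'} \;-\; \frac{x_{k-2}'}{x_k'}.$$
On a type A band $I_{k+1}\subset I_k$ we have $|x_k|,|x_{k+1}|\le 2$, so by \eqref{critical} also $|x_{k-1}|>2$. Viewing the invariant \eqref{invariant} as a quadratic in $x_{k-1}$ and using $|x_k x_{k+1}|\le 4$, I would deduce the a priori bound
$$|x_{k-1}| \;\le\; 2+\sqrt{\lambda^2+8} \;\le\; \lambda+3 \qquad (\lambda\ge 4).$$
This already isolates the dominant contribution of size $\sim\lambda$ and explains the linear dependence of $S_u(\lambda)$ on $\lambda$.

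The main obstacle is controlling the two remaining terms, which involve derivative ratios at lower levels evaluated at the same $E\in I_{k+1}$. Since $|x_{k-1}|$ is bounded away from $2$ by roughly $\lambda$, heuristically $x_{k-1}$ is comparatively flat on the small interval $I_{k+1}$ while $x_k$ must sweep across $[-2,2]$ on the enclosing band $I_k$; this suggests $|x_{k-1}'/x_k'|$ and $|x_{k-2}'/x_k'|$ are bounded by absolute constants rather than growing with $k$. I would therefore set up a simultaneous induction on $k$ producing uniform bounds for all three ratios $x_{k+1}'/x_k'$, $x_{k-1}'/x_k'$, $x_{k-2}'/x_k'$ on the appropriate bands, with base cases supplied by the explicit values $x_{-1}'=0,\ x_0'=x_1'=1,\ x_2'=2E-\lambda$. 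The constant $22$ in $S_u(\lambda)=2\lambda+22$ should emerge from the additive contribution $\lambda+3$ above plus the inductively propagated error terms $2|x_{k-1}'/x_k'|+|x_{k-2}'/x_k'|$, which the induction keeps bounded by a fixed multiple of $\lambda$ plus an absolute constant.

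For part (b), I would differentiate $x_{k+2}=x_{k+1}x_k-x_{k-1}$ to obtain the parallel identity
$$\frac{x_{k+2}'}{x_k'} \;=\; x_{k+1} \;+\; x_k\,\frac{x_{k+1}'}{x_k'} \;-\; \frac{x_{k-1}'}{x_k'}.$$
Since $I_{k+2}$ is type B, we have $|x_k|,|x_{k+2}|\le 2$ but $|x_{k+1}|>2$, and the shifted invariant yields the symmetric bound $|x_{k+1}|\le \lambda+3$. The inductive scheme from (a) then applies with $x_{k+1}$ in place of $x_{k-1}$. Because Lemma~\ref{order} links each enclosing type B band in $\sigma_{k+2}$ to a type A band in $\sigma_{k+1}$ inside it, I expect (a) and (b) to be established by a single joint induction rather than separately. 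A subtle point is that the ratio $x_{k+1}'/x_k'$ appearing in the identity for (b) must be controlled at a point $E\in I_{k+2}\subset I_k$ that is disjoint from $\sigma_{k+1}$, i.e., off the bands on which (a) gives information; this is precisely why the induction has to simultaneously track on-band and off-band derivative ratios.
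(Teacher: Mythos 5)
First, a point of reference: the paper does not prove this lemma at all --- it is quoted from \cite{DT} --- so the only comparison available is with the argument there, which, like yours, differentiates the trace recursion and inducts over the band hierarchy. Your algebraic setup is correct and matches that approach: the identity $x_{k+1}'/x_k' = x_{k-1} + x_k\,(x_{k-1}'/x_k') - (x_{k-2}'/x_k')$, the a priori bound $|x_{k-1}|\le 2+\sqrt{\lambda^2+8}\le\lambda+3$ extracted from \eqref{invariant} together with \eqref{critical}, and the symmetric treatment of part (b). You also correctly identify the crux, namely the off-band ratios $|x_{k-1}'/x_k'|$ and $|x_{k-2}'/x_k'|$.

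The genuine gap is that the induction is never closed at exactly that crux. The inductive hypothesis you state (upper bounds on child-over-parent derivative ratios) points in the wrong direction for the terms you need: bounding $|x_{k-2}'/x_k'|$ from above is a \emph{lower} bound on $|x_k'|$ relative to $|x_{k-2}'|$, which the conclusion of part (b) one level down, $|x_k'|\le S_u(\lambda)|x_{k-2}'|$, cannot supply. The missing quantitative ingredient is the two-sided estimate on a type B band $I_k$: besides $|x_{k-1}|\le\lambda+3$, the invariant \eqref{invariant} with $|x_k|,|x_{k-2}|\le 2$ also gives $|x_{k-1}|\ge\lambda-2$, and it is this lower bound, fed into $x_k' = x_{k-1}x_{k-2}' + x_{k-2}x_{k-1}' - x_{k-3}'$, that forces $|x_k'|$ to dominate $|x_{k-2}'|$ by a factor of order $\lambda$ and thereby makes the off-band ratios controllable. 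Your heuristic (``$x_{k-1}$ is flat, so the ratios are bounded by absolute constants'') is not a proof, and your two descriptions of the error terms --- ``bounded by absolute constants'' versus ``a fixed multiple of $\lambda$ plus an absolute constant'' --- are mutually inconsistent and lead to different final constants. Since the ratio bounds at one level enter multiplicatively into the next, and the margin is thin for $\lambda$ near $4$ (where $\lambda-2$ barely exceeds $2$), one must exhibit a specific system of inductive inequalities, split according to the A/B types of the enclosing bands, and verify it is self-consistent; without that, neither the validity of the induction nor the constant $2\lambda+22$ is actually established.
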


We use the symbol $S_u(\lambda)$ instead of the explicit $2\lambda
+ 22$ to make the dependence of everything that follows on this
quantity explicit.

Inductively, Lemma~\ref{dtlemma} gives an upper bound for $|x_k'|$
on each band of $\sigma_k$ and this in turn will give a lower
bound on the length of a band, which will be crucial in the proof
of the following result.

\begin{theorem}\label{lowerboundthm}
For all $\lambda > 4$,
\begin{equation}\label{dimblowerexpl}
\dim_B^-(\Sigma_\lambda) \ge \frac{f^*}{\log S_u(\lambda)}.
\end{equation}
\end{theorem}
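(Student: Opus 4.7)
The plan is to produce, for each small $\varepsilon>0$, a large collection of pairwise $\varepsilon$-separated points of $\Sigma_\lambda$ by picking one from each type-A band of $\sigma_k$ of a carefully chosen history $m$, and then to optimize the trade-off between band-count and band-length using Proposition~\ref{max-log}.

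The first step is to convert Lemma~\ref{dtlemma} into a lower bound on band lengths. By induction on the chain of nestings, for any band $I\subset\sigma_k$ with history $m=\#\{0\le j<k:I\cap\sigma_j\neq\emptyset\}$, one has $\max_{E\in I}|x_k'(E)|\le S_u(\lambda)^m$. Since $x_k$ maps $I$ homeomorphically onto $[-2,2]$, this yields the crucial length bound
$$
|I|\;\ge\;\frac{4}{S_u(\lambda)^m},
$$
so each of the $a_{k,m}$ type-A bands counted in Proposition~\ref{max-log} has length at least $4S_u(\lambda)^{-m}$ and, by Lemma~\ref{neint}, contains a point of $\Sigma_\lambda$.

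Given small $\varepsilon>0$, I would then choose $m=\lfloor\log(4/\varepsilon)/\log S_u(\lambda)\rfloor$, so that $4S_u(\lambda)^{-m}\ge\varepsilon$, together with $k$ for which $m/k$ is as close to $x^*$ as we please (for instance, $k=\lceil m/x^*\rceil$ as $\varepsilon\to 0$). Proposition~\ref{max-log} then delivers
$$
a_{k,m}\;\ge\;\exp\!\big(m(f^*-o(1))\big)\qquad\text{as }\varepsilon\to 0.
$$
The corresponding $a_{k,m}$ pairwise disjoint type-A bands of $\sigma_k$, each of length at least $\varepsilon$, produce $a_{k,m}$ distinct points of $\Sigma_\lambda$. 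Assuming these points can be chosen pairwise $\varepsilon$-separated, they occupy $a_{k,m}$ distinct grid cells, so $N_{\Sigma_\lambda}(\varepsilon)\ge a_{k,m}$, and
$$
\frac{\log N_{\Sigma_\lambda}(\varepsilon)}{\log(1/\varepsilon)}\;\ge\;\frac{m(f^*-o(1))}{m\log S_u(\lambda)+O(1)}\;\longrightarrow\;\frac{f^*}{\log S_u(\lambda)}
$$
as $\varepsilon\to 0$, which is the theorem.

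The main obstacle is precisely the last, separation, step. A naive choice of $\Sigma_\lambda$-point in each band could place two such points on opposite sides of a tiny gap between adjacent type-A bands of history $m$ and so land them in the same $\varepsilon$-grid cell, collapsing the combinatorial count. To rule this out, in each type-A band $I$ I would select the $\Sigma_\lambda$-point inside a sufficiently deep sub-band $I'\subset I$ of some $\sigma_{k+2\ell}$ that is strictly interior to $I$ with $\operatorname{dist}(I',\partial I)\gtrsim |I|\gtrsim\varepsilon$. The existence of such an interior sub-band of comparable depth follows by iterating the band-within-band description of Lemma~\ref{order} and re-applying the derivative bounds of Lemma~\ref{dtlemma} at level $k+2\ell$; with that choice, points in distinct bands are automatically $\varepsilon$-separated and the box count $N_{\Sigma_\lambda}(\varepsilon)\ge a_{k,m}$ is secured.
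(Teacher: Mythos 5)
Your overall strategy is exactly the paper's: count the $a_{k,m}$ type~A bands of history $m$, use the inductive derivative bound from Lemma~\ref{dtlemma} to get the length bound $|I|\ge 4S_u(\lambda)^{-m}$, put one point of $\Sigma_\lambda$ in each band via Lemma~\ref{neint}, take $m/k\to x^*$ so that Proposition~\ref{max-log} gives $a_{k,m}=e^{m(f^*-o(1))}$, and conclude. You also correctly identify the one nontrivial point: disjointness of the bands does not by itself prevent two chosen points from falling in the same $\varepsilon$-cell across a tiny gap.

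The gap is in your proposed repair of that step. You want to choose, inside each type~A band $I$, a sub-band $I'$ of some deeper $\sigma_{k+2\ell}$ with $\mathrm{dist}(I',\partial I)\gtrsim|I|$ uniformly in $k$. Neither Lemma~\ref{order} nor Lemma~\ref{dtlemma} gives this: Lemma~\ref{dtlemma} is a bound on the ratio of derivatives, so it controls the \emph{length} of the nested sub-band from below, but says nothing about \emph{where} that sub-band sits inside $I$. A priori the unique type~B band $I_{k+2}\subset I$ could hug an endpoint of $I$; ruling that out quantitatively (i.e., showing $|x_k|$ stays bounded away from $2$ on $I_{k+2}$ by a $\lambda$- and $k$-independent constant) is genuinely extra work not contained in the cited lemmas. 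The fix is much simpler and purely combinatorial: order the $N_k=a_{k,m}$ bands left to right and keep only those with odd index. Any two consecutive retained points are then separated by an entire discarded band, hence by at least $4S_u(\lambda)^{-m}\ge\varepsilon$, so they lie in distinct $\varepsilon$-cells and $N_{\Sigma_\lambda}(\varepsilon)\ge N_k/2$. The lost factor $2$ disappears in the limit since $\frac{1}{m}\log 2\to 0$. With that substitution (and the routine interpolation in $\varepsilon$ between consecutive scales $\varepsilon_k$), your argument coincides with the paper's proof.
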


\begin{proof}
Let $m_k = \lfloor 3kx^* \rfloor$. Since $x^*>0.4$, the sequence
$\{m_k\}$ is strictly increasing and $\lim_{k \to \infty}
\frac{m_k}{3k} = x^*$. Write
$$
f_k = \frac{1}{m_k} \log a_{3k,m_k}.
$$
We know from Proposition~\ref{max-log} that
\begin{equation}\label{fkconv}
\lim_{k \to \infty} f_k = f^*
\end{equation}
and it is obvious that
\begin{equation}\label{mkconv}
\lim_{k \to \infty} \frac{m_{k+1}}{m_{k}} = 1.
\end{equation}
For a given $k$, let us consider the type A bands in $\sigma_{3k}$
that lie in $m_k$ bands on previous levels. By definition, there
are $N_k:= a_{3k,m_k}$ such bands. By Lemma~\ref{dtlemma}, each of
them has length at least $\varepsilon_k := 4 S_u (\lambda)^{-m_k}$,
since on each band of $\sigma_{3k}$, the function $x_{3k}$ is
strictly monotone and runs from $\pm 2$ to $\mp 2$.  Let
$\{A_{3k,j} \}_{j=1}^{N_k}$ be these bands, enumerated so that
$A_{3k,j}$ is to the left of $A_{3k,j+1}$ for every $j$. Each band
has non-empty intersection with $\Sigma_\lambda$ by
Lemma~\ref{neint}. Thus, for every $j$, there exists $E_{3k,j} \in
A_{3k,j} \cap \Sigma_\lambda$. Clearly, the energies $\{ E_{3k,j}
\}$ are increasing in $j$. Consider $\{ E_{3k,j} \}$ with $j$ odd,
that is, $j = 2s+1$, $0 \le s \le \lfloor N_k/2 \rfloor$. Since
any two bands $A_{3k, 2s-1}$, $A_{3k, 2s+1}$ are separated by the
band $A_{3k,2s}$, which has length at least $\varepsilon_k$, we
get $|x_{3k, 2s-1}-x_{3k, 2s+1}| \ge \varepsilon_k$ for every $s$.
Thus, the $E_{3k,2s+1}$ belong to different $\varepsilon$-boxes if
$\varepsilon < \varepsilon_k$. We conclude that
$N_{\Sigma_\lambda} (\varepsilon) \ge \frac{N_k}{2}$ for every
$\varepsilon <\varepsilon_k$.

Given any $\varepsilon > 0$, choose $k$ with $\varepsilon_{k+1} \le
\varepsilon < \varepsilon_{k}$. Then,
$$
\frac{\log N_{\Sigma_\lambda}(\varepsilon)}{\log 1/\varepsilon} \ge \frac{\log (a_{3k,m_k}) - \log 2 }{\log
1/\varepsilon_{k+1}} \ge \frac{f_k - \frac{1}{m_k}\log
2}{\frac{m_{k+1}}{m_k}\log S_u(\lambda)}.
$$
Since, as $\varepsilon \to 0$, we have $k,m_k \to \infty$ and
$$
\lim_{k \to \infty} \frac{f_k}{\frac{m_{k+1}}{m_k}\log S_u(\lambda)}
= \frac{f^*}{\log S_u(\lambda)},
$$
by \eqref{fkconv} and \eqref{mkconv}, the result follows.
\end{proof}

\section{An Upper Bound for the Hausdorff Dimension of the
Spectrum}\label{haussec}

In this section we prove an upper bound for the Hausdorff
dimension of $\Sigma_\lambda$. We will use the canonical coverings
$\sigma_k \cup \sigma_{k+1}$ of $\Sigma_\lambda$ and estimate the
lengths of the intervals in these periodic spectra from above
using our combinatorial result from Section~\ref{bandsec} together
with a scaling result that is analogous to Lemma~\ref{dtlemma},
but which gives bounds from the other side.

Namely, the following was shown by Killip, Kiselev, and Last
\cite[Lemma~5.5]{kkl}.
\begin{lemma}\label{kkllemma}
For every $\lambda \ge 8$, there exists $S_l(\lambda)$ such that the following
holds: \\[1mm]
{\rm (a)} Given any {\rm (}type A{\rm )} band $I_{k+1} \subset
\sigma_{k+1}$ lying in the band $I_k \subset \sigma_k$, we have for
every $E \in I_{k+1}$,
$$
\left| \frac{x_{k+1}'(E)}{x_k'(E)} \right| \ge S_l(\lambda).
$$
{\rm (b)} Given any {\rm (}type B{\rm )} band $I_{k+2} \subset
\sigma_{k+2}$ lying in the band $I_k \subset \sigma_k$, we have for
every $E \in I_{k+2}$,
$$
\left| \frac{x_{k+2}'(E)}{x_k'(E)} \right| \ge S_l(\lambda).
$$
For example, one can choose
$$
S_l(\lambda) = \frac{1}{2} \left( (\lambda - 4) + \sqrt{(\lambda -
4)^2 - 12} \right).
$$
\end{lemma}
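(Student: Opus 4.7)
The plan is to mirror the proof of Lemma~\ref{dtlemma} from \cite{DT}, but with all inequalities reversed. Differentiating the trace recursion \eqref{tracemap} yields the three-term linear relation
\[
x_{k+1}'(E) = x_{k-1}(E)\,x_k'(E) + x_k(E)\,x_{k-1}'(E) - x_{k-2}'(E),
\]
and, after dividing by $x_k'(E)$,
\[
\frac{x_{k+1}'(E)}{x_k'(E)} = x_{k-1}(E) + x_k(E)\,\frac{x_{k-1}'(E)}{x_k'(E)} - \frac{x_{k-2}'(E)}{x_k'(E)}.
\]
The argument hinges on showing that the leading term $x_{k-1}(E)$ is of order $\lambda$ on a type A band $I_{k+1}$, while the two ``error'' ratios on the right are controlled inductively from above by $1/S_l(\lambda)$ and $1/S_l(\lambda)^2$, respectively.

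The lower bound on $|x_{k-1}(E)|$ comes directly from the invariant \eqref{invariant}. On $I_{k+1}$ one has $|x_{k+1}(E)|, |x_k(E)| \le 2$, so viewing \eqref{invariant} as a quadratic in $x_{k-1}$ with the other two variables constrained, its discriminant equals $(4-x_{k+1}^2)(4-x_k^2) + 4\lambda^2 \ge 4\lambda^2$ and the product of its roots equals $x_{k+1}^2 + x_k^2 - 4 - \lambda^2 < 0$. Hence the two roots have opposite signs and the root of smaller absolute value still satisfies
\[
|x_{k-1}(E)| \ge \frac{\sqrt{4\lambda^2 + (4-x_{k+1}^2)(4-x_k^2)} - |x_{k+1}x_k|}{2} \ge \lambda - 2.
\]

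The error terms are handled by a simultaneous induction on the level $k$ along every nested chain of bands permitted by Lemma~\ref{order}. The inductive hypothesis carries both the desired lower bound $|x_{j+1}'/x_j'| \ge S_l(\lambda)$ and, as its byproduct, the reciprocal upper bound $|x_{j-1}'/x_j'| \le 1/S_l(\lambda)$. Substituting these into the divided recurrence, together with $|x_k| \le 2$, yields
\[
\left|\frac{x_{k+1}'(E)}{x_k'(E)}\right| \ge |x_{k-1}(E)| - \frac{2}{S_l(\lambda)} - \frac{1}{S_l(\lambda)^2}.
\]
Requiring the right-hand side to be at least $S_l(\lambda)$, and using the sharpest available lower bound for $|x_{k-1}(E)|$ from the invariant, reduces after collecting terms to the quadratic condition $t^2 - (\lambda-4)t + 3 \ge 0$, whose threshold is precisely
\[
S_l(\lambda) = \frac{1}{2}\bigl((\lambda-4) + \sqrt{(\lambda-4)^2 - 12}\,\bigr).
\]
The hypothesis $\lambda \ge 8$ is exactly what guarantees $(\lambda-4)^2 \ge 12$ so that $S_l(\lambda)$ is real and positive, and it provides enough slack to check the base of the induction directly from the initial data $x_{-1}=2$, $x_0 = E$, $x_1 = E-\lambda$.

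Part (b) follows from the same template applied to the ``two-step'' version $x_{k+2} = x_{k+1} x_k - x_{k-1}$ of the recurrence, which differentiates to
\[
\frac{x_{k+2}'(E)}{x_k'(E)} = x_{k+1}(E) + x_k(E)\,\frac{x_{k+1}'(E)}{x_k'(E)} - \frac{x_{k-1}'(E)}{x_k'(E)}.
\]
On a type B band $I_{k+2} \subset \sigma_{k+2}$ one has $|x_{k+2}|, |x_k|\le 2$, and the invariant (now read as a quadratic in $x_{k+1}$) yields $|x_{k+1}(E)| \ge \lambda - 2$; the same inductive machinery then closes the estimate with the same constant $S_l(\lambda)$. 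The main obstacle I anticipate is setting up the joint induction cleanly, because Lemma~\ref{order} interleaves type A transitions (level jump $+1$) with type B transitions (level jump $+2$), so the induction hypothesis must carry matched estimates at two consecutive scales in order to legitimately bound the $1/S_l^2$ term above. Only with this careful bookkeeping does the sharp constant $S_l(\lambda)$ emerge; any looseness in the inductive ratios yields merely an $O(\lambda)$ lower bound without the precise threshold needed for the Hausdorff-dimension estimates of Section~\ref{haussec}.
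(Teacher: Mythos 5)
The paper does not prove this lemma; it cites \cite[Lemma~5.5]{kkl}, so your proposal has to stand on its own. The overall skeleton (differentiate the trace recursion, extract $|x_{k-1}|\ge\lambda-2$ from the invariant, induct along the nested chains of Lemma~\ref{order}) is the right one, and your discriminant computation giving $|x_{k-1}(E)|\ge\lambda-2$ on a type~A band is correct. But the treatment of the error terms has a genuine gap. In part (a) the band $I_k$ containing $I_{k+1}$ is necessarily of type~B, hence $I_k\cap\sigma_{k-1}=\emptyset$: level $k-1$ is \emph{skipped} in the chain, so the inductive hypothesis compares $x_k'$ only with $x_{k-2}'$ and says nothing whatsoever about $x_{k-1}'/x_k'$. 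Your claim that $|x_{k-1}'/x_k'|\le 1/S_l(\lambda)$ comes ``as a byproduct'' of the induction is therefore unfounded, and it is in fact false at the bottom of the hierarchy: $x_0'=x_1'=1$, so $|x_0'/x_1'|=1$, not $O(1/S_l)$. The correct statement one can hope to propagate is only $|x_{k-1}'/x_k'|=O(1)$, and establishing even that requires expanding $x_{k-1}'$ once more through the recursion and carrying matched \emph{upper} bounds on lower-level derivatives alongside the lower bounds — this interleaving is the actual content of the KKL argument and is absent from your sketch. The same defect is worse in part (b): in your identity for $x_{k+2}'/x_k'$ the term $x_k\,x_{k+1}'/x_k'$ involves a derivative at the \emph{higher} level $k+1$, and such ratios grow (they are of size $S_u$, not $1/S_l$), so it cannot be discarded as a small error without first re-expanding $x_{k+1}'$ via the recursion.

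A quantitative symptom confirms the gap: with your claimed error budget the closing condition would read $\lambda-2-2/t-1/t^2\ge t$, which is a cubic inequality in $t$ and does not reduce to $t^2-(\lambda-4)t+3\ge 0$. The quadratic $t^2-(\lambda-4)t+3=0$, equivalently $t=(\lambda-2)-\bigl(2+3/t\bigr)$, corresponds to an error budget containing an $O(1)$ term — exactly the $|x_{k-1}'/x_k'|=O(1)$ contribution your accounting wrongly suppresses to $O(1/S_l)$. So while you have correctly identified the main term and the source of the threshold $\lambda\ge 8$, the induction as described does not close; to repair it you must formulate the inductive hypothesis to include explicit two-sided control of the ``skipped-level'' derivatives, as in \cite[Lemma~5.5]{kkl} and the companion upper-bound argument of \cite{DT}.
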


As before, we use the symbol $S_l(\lambda)$ instead of the
explicit possible choice to make the dependence of everything
that follows on this quantity explicit.

Inductively, Lemma~\ref{kkllemma} gives a lower bound for $|x_k'|$
on each band of $\sigma_k$ and hence an upper bound for the length
of a band.

\begin{theorem}\label{thm5}
Suppose $\lambda \ge 8$. Then
$$
\dim_H (\Sigma_\lambda) \le \frac{f^*}{\log S_l(\lambda)}.
$$
\end{theorem}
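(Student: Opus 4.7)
The plan is to use the periodic coverings $\sigma_k \cup \sigma_{k+1}$ of $\Sigma_\lambda$, which are valid by \eqref{suto2} and whose maximal bandwidth decays exponentially in $k$ by Lemma~\ref{kkllemma}. I would fix $s > f^*/\log S_l(\lambda)$ and show that $\sum_{I \subset \sigma_k \cup \sigma_{k+1}} |I|^s \to 0$ as $k \to \infty$; this forces $h^s(\Sigma_\lambda) = 0$, and letting $s \downarrow f^*/\log S_l(\lambda)$ yields the theorem.

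The first step is to estimate the length of an individual band $I \subset \sigma_k$ in terms of its ``nesting index'' $m = \#\{0 \le j < k : I \cap \sigma_j \ne \emptyset\}$. By Lemma~\ref{order}, $I$ sits inside a chain of $m$ ancestor bands on earlier levels, each link of the chain being a type A inclusion (jump of one level) or a type B inclusion (jump of two levels). Applying Lemma~\ref{kkllemma} along the chain and using $|x_0'|=|x_1'|=1$ at the base gives $|x_k'(E)| \ge S_l(\lambda)^m$ for every $E \in I$. Since $x_k$ is strictly monotone on $I$ and maps it onto $[-2,2]$, the mean value theorem yields
$$
|I| \le 4\, S_l(\lambda)^{-m}.
$$

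Next I would combine this with the combinatorial counts from Section~\ref{bandsec}. The number of type A bands in $\sigma_k$ with nesting index $m$ is exactly $a_{k,m}$; since $b_{k,m} = a_{k+1,m+1}$, the contribution of type B bands and of $\sigma_{k+1}$ is controlled in the same way. By Proposition~\ref{max-log}, $a_{k,m} \lesssim k^{1/2} \exp(m f(m/k))$, and these counts vanish outside $\lceil k/2 \rceil \le m \le \lfloor 2k/3 \rfloor$. Consequently,
$$
\sum_{I \subset \sigma_k \cup \sigma_{k+1}} |I|^s \;\lesssim\; \sum_{m=\lceil k/2 \rceil}^{\lfloor 2k/3 \rfloor} k^{1/2}\, 4^s \exp\!\bigl( m\,[\,f(m/k) - s\log S_l(\lambda)\,]\bigr).
$$
For $s > f^*/\log S_l(\lambda)$, the bracketed quantity is bounded above by some $-\delta < 0$ uniformly in $m$, since $f(x) \le f^*$ on $[\tfrac12,\tfrac23]$. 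Each summand then decays exponentially in $m \gtrsim k/2$, and summing the $O(k)$ terms gives a total tending to $0$ as $k \to \infty$.

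The main technical point will be to verify carefully that Lemma~\ref{kkllemma}, which only controls ratios across single (type A) or double (type B) level jumps, composes multiplicatively along the full chain of ancestors down to $\sigma_0$ or $\sigma_1$, so that the length bound has precisely the exponent $m$ that appears in the combinatorial counts. Once this is in hand, the optimal constant $f^*$ emerges automatically from maximizing $f$ on $[\tfrac12,\tfrac23]$ through Proposition~\ref{max-log}, and the remaining estimate is of geometric-series type, dominated by the heaviest exponential term.
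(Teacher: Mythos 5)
Your proposal is correct and follows essentially the same route as the paper: cover $\Sigma_\lambda$ by $\sigma_k\cup\sigma_{k+1}$, bound each band of nesting index $m$ by $4\,S_l(\lambda)^{-m}$ via Lemma~\ref{kkllemma}, weight by the counts $a_{k,m}$ (and $b_{k,m}=a_{k+1,m+1}$), and use the Stirling estimate of Proposition~\ref{max-log} to see that for $s>f^*/\log S_l(\lambda)$ the sum $\sum_I|I|^s$ is $\lesssim k^{3/2}\exp\bigl(\tfrac{k}{2}(f^*-s\log S_l(\lambda))\bigr)\to 0$. The only place you add detail beyond the paper is in spelling out that the single-step derivative ratios of Lemma~\ref{kkllemma} compose along the ancestor chain with $|x_0'|=|x_1'|=1$ at the base, which the paper dispatches with the word ``inductively''; that elaboration is sound.
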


\noindent\textit{Remark.} Raymond \cite{r} proved an upper bound
of the form
$$
\dim_H (\Sigma_\lambda) \le \frac{2 \log \phi}{\log S_l(\lambda)}.
$$
As above, to get this weaker result, all one needs is information
about the support of $a_{k,m}$, which in turn follows quickly from
the properties \eqref{suto1}, \eqref{suto2}, and \eqref{critical}.
Using our more detailed information about the values of $a_{k,m}$ on
the support, we can improve the constant $2$ to the value $f^\#
\approx 1.83157$, which is optimal by our discussion in the
introduction. Our proof is inspired by Raymond's proof, and the
improvement stems from our more detailed analysis of the
distributions of bandwidths in the approximating periodic spectra.

\begin{proof}
As recalled above, the set $\sigma_k \cup \sigma_{k+1}$ is a finite
union of compact real intervals that covers $\Sigma_\lambda$. There
are $a_{k,m} + b_{k,m}$ bands in $\sigma_k$ that lie in exactly $m$
intervals on previous levels. We know that the length of each of these
bands is bounded from above by $4 S_l(\lambda)^{-m}$.

Thus, it suffices to show that, given any
\begin{equation}\label{chooses}
s > \frac{f^*}{\log S_l(\lambda)},
\end{equation}
we have
$$
\lim_{k \to \infty} \sum_{m = \lceil \frac{k}{2} \rceil}^{\lfloor
\frac{2k}{3} \rfloor} \left( a_{k,m} + b_{k,m} \right) (4
S_l(\lambda))^{-sm} + \sum_{m = \lceil \frac{k+1}{2}
\rceil}^{\lfloor \frac{2(k+1)}{3} \rfloor} \left( a_{k+1,m} +
b_{k+1,m} \right) (4 S_l(\lambda))^{-sm} = 0.
$$
For simplicity, we will only consider
$$
C_k = \sum_{m = \lceil \frac{k}{2} \rceil}^{\lfloor \frac{2k}{3}
\rfloor} a_{k,m} S_l(\lambda)^{-sm};
$$
the other terms can be dealt with in an analogous way.

By our upper bound for $a_{k,m}$ established earlier and the
assumption \eqref{chooses}, we have
\begin{align*}
C_k & \lesssim k^{1/2} \sum_{m = \lceil \frac{k}{2} \rceil}^{\lfloor
\frac{2k}{3} \rfloor} \exp \left( m (f(\tfrac{m}{k}) - s \log
S_l(\lambda))\right) \\
& \lesssim k^{3/2} \exp \left( \frac{k}{2} (f^* - s \log
S_l(\lambda))\right).
\end{align*}
Using \eqref{chooses} again, we see that the right-hand side goes to
zero as $k \to \infty$.
\end{proof}

Theorem~\ref{lowerboundthm} and Theorem~\ref{thm5} together
establish Theorem~\ref{mainthm} from the Introduction.

\section{Transfer Matrices, Box Counting Dimension, and Wavepacket
Spreading}\label{qdsec}

In this section we will first consider general half-line
Schr\"odinger operators and derive a number of consequences from
polynomially bounded transfer matrices. We then turn to general
Schr\"odinger operators on the line and use our half-line results to
derive a lower bound for the exponent governing the spreading of the
fastest part of the wavepacket in terms of the box counting
dimension of the spectrum. These results are relevant in our context
since for the Fibonacci potential, it is known that the transfer
matrices are polynomially bounded for all parameter values and all
energies in the spectrum.

Consider a discrete half-line Schr\"odinger operator on
$\ell^2(\Z_+)$,
\begin{equation}\label{box001}
[H_+ u](n)=u(n+1) + u(n-1) + V(n) u (n),
\end{equation}
with Dirichlet boundary condition $u(0) = 0$. Here, $V$ is a
bounded real-valued function. $H_+$ is a bounded self-adjoint
operator in $\ell^2(\Z_+)$. The vector $\delta_1$ is cyclic and we
denote its spectral measure by $\mu$.

For $z \in \C$ and $m,n \in \Z_+$, we denote by $T(n,m; z)$ the
transfer matrix associated with the difference equation
\begin{equation}\label{eve}
u(n+1) + u(n-1) + V(n) u(n) = z u(n).
\end{equation}
That is, $T(n,m;z)$ is the unique unimodular $2 \times 2$ matrix
for which we have
$$
\begin{pmatrix} u(n+1) \\ u(n) \end{pmatrix} = T(n,m;z) \begin{pmatrix}
u(m+1) \\ u(m) \end{pmatrix}
$$
for every solution of \eqref{eve}.

To study the spreading of a wavepacket with initial state
$\delta_1$ under the dynamics generated by $H$, one usually
considers for $p > 0$,
$$
\langle |X|_{\delta_1}^p \rangle (T) = \frac{2}{T} \int_0^{\infty}
e^{-2t/T} \sum_{n \in \Z_+} n^p | \langle e^{-itH_+} \delta_1,
\delta_n \rangle |^2 \, dt.
$$
One is interested in the power-law growth of this quantity and
therefore studies  the lower transport exponent
$$
\beta^-_{\delta_1}(p)=\liminf_{T \to \infty} \frac{\log \langle
|X|_{\delta_1}^p \rangle (T) }{p \, \log T}
$$
and the upper transport exponent
$$
\beta^+_{\delta_1}(p)=\limsup_{T \to \infty} \frac{\log \langle
|X|_{\delta_1}^p \rangle (T) }{p \, \log T}.
$$
Both functions $\beta^\pm_{\delta_1} (p)$ are nondecreasing in $p$
and hence the following limits exist:
$$
\alpha_u^\pm = \lim_{p \to \infty} \beta_{\delta_1}^\pm (p).
$$
Alternatively (see \cite[Theorem 4.1]{GKT}),
we also have $\alpha_u^\pm = \sup \{ \alpha \ge 0 :
S^\pm (\alpha) < \infty \}$, where
\begin{align*}
S^-(\alpha) & = - \liminf_{T \to \infty} \frac{\log P_+(T^\alpha -2,
T) }{\log T} \\
S^+(\alpha) & = - \limsup_{T \to \infty} \frac{\log P_+(T^\alpha -2,
T) }{\log T}
\end{align*}
and
$$
P_+(N,T) = \sum_{n > N} \frac{2}{T} \int_0^{\infty} e^{-2t/T} |
\langle e^{-itH_+} \delta_1, \delta_n \rangle |^2 \, dt.
$$

The following result derives several consequences from the fact
that the transfer matrices are polynomially bounded for energies
from some subset of the real line. We denote by $\mu$ the spectral
measure associated with the vector $\delta_1$.

\begin{theorem}\label{box}\label{thmpowerlaw}
Consider a half-line Schr\"odinger operator $H_+$ with bounded
potential as above. Let $A \subset [-B,B]$, $B>0$, and assume that
there exist positive constants $C, \alpha$ such that for every $E
\in A$ and every $N \ge 1$,
\begin{equation}\label{box101}
\|T(n,m; E)\| \le C N^\alpha \text{ for all } m,n \in [1,N].
\end{equation}
{\rm (a)} For every $\sigma>0$, there exists $\e_0(\sigma) > 0$
such that
\begin{equation}\label{box01}
\mu([E-\e, E+\e]) \ge D \e^L
\end{equation}
for $\e \in (0,\e_0)$ and $E \in A$. Here, $L=(1+\sigma)(1+3
\alpha)$ and $D$ depends only on $\sigma$.
\\[1mm]
{\rm (b)} For any $p>0$, we have that
\begin{equation}\label{box02}
\beta_{\delta_1}^\pm (p) \ge \left( 1 + \frac{1}{p} \right)
\dim_B^\pm (A) - \frac{1+3 \alpha}{p}.
\end{equation}
In particular,
\begin{equation}\label{box03}
\alpha_u^\pm \ge {\rm dim}_B^\pm (A).
\end{equation}
\end{theorem}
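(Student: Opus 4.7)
For \textbf{part (a)}, the plan is to apply the Jitomirskaya--Last machinery, which converts polynomial bounds on transfer matrices into power-law lower bounds on the local spectral measure. For $E \in A$, the hypothesis $\|T(n,m;E)\| \le C N^\alpha$ for $m,n \in [1,N]$ forces every normalized solution of the generalized eigenvalue equation at energy $E$ to be bounded by $\sim N^\alpha$ on $[1,N]$, and hence $\sum_{n=1}^N |u(n)|^2 \lesssim N^{2\alpha+1}$. Substituting this into the Jitomirskaya--Last-type formula, which bounds $\mu([E-\e, E+\e])$ below by (a multiple of) $\e$ divided by such an $\ell^2$-norm at a scale $N$ coupled to $\e$ by $N \asymp \e^{-(1+\sigma)}$ or similar, and then optimizing in $N$, produces the bound $D\e^L$ with $L=(1+\sigma)(1+3\alpha)$. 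The cushion $\sigma>0$ is what lets one absorb the constants $C$, $B$, and the threshold $\e_0(\sigma)$ into the prefactor without spoiling the exponent.

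For \textbf{part (b)}, the plan is to combine (a) with a box-counting covering argument and a Parseval/Combes--Guarneri-type lower bound on the time-averaged outside probability $P_+(N,T)$, in the form used in \cite{GKT,DT}. Fix $T$ large and let $d<\dim_B^\pm(A)$. By the definition of the box-counting dimensions, at scale $\e = 1/T$ (for $\dim_B^-$ at every sufficiently small $\e$, for $\dim_B^+$ along a sequence $\e_k\to 0$) one obtains $\gtrsim \e^{-d}$ pairwise disjoint $\e$-intervals each meeting $A$. By (a), each carries spectral mass $\ge D\e^L$, so summing produces total spectral mass $\gtrsim \e^{L-d}$ on a union of widely separated clusters. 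A Plancherel-type argument then converts this spectral mass into a lower bound $P_+(N,T)\gtrsim \e^{L-d}$ provided $N$ is of order $T$ raised to an exponent governed by $d$. Plugging this into the elementary inequality $\langle|X|_{\delta_1}^p\rangle(T)\ge N^p P_+(N,T)$, taking logarithms, and tracking the exponents yields (\ref{box02}); finally letting $p\to\infty$ absorbs the subtracted term $\frac{1+3\alpha}{p}$ and produces (\ref{box03}).

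The \textbf{main obstacle} will be the careful quantitative bookkeeping across three competing scales: the Jitomirskaya--Last estimate (a) wants $\e$ small compared with the transfer-matrix scale; the box-counting step wants $\e\asymp 1/T$; and the Combes--Guarneri-type lower bound for $P_+(N,T)$ wants $N$ suitably smaller than $T$. Threading these so that the error terms do not dominate the main contribution is precisely what produces both the correction $-\frac{1+3\alpha}{p}$ in (\ref{box02}) and, in the limit $p\to\infty$, the clean bound (\ref{box03}). The continuous-parameter cushion $\sigma>0$ in (a) is there to keep this three-way optimization from degenerating after the multiplicative constants are absorbed.
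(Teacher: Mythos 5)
Your overall architecture---convert the transfer-matrix bound into a power-law lower bound on the spectral measure, then feed that into a box-counting/dynamical argument---matches the paper's. However, the sketch of part~(a) has a genuine gap, and part~(b), while plausible, takes a route the paper deliberately avoids.

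\textbf{Part (a).} The Jitomirskaya--Last subordinacy machinery does not, by itself, give lower bounds on $\mu([E-\e,E+\e])$. What JL controls is $|m(E+i\e)|$ in terms of the ratio $\|u_1\|_{L(\e)}/\|u_2\|_{L(\e)}$ at a scale $L(\e)$ determined implicitly by $\|u_1\|_L\|u_2\|_L\asymp\e^{-1}$ --- that scale is not free for you to set equal to $\e^{-(1+\sigma)}$. Passing from the Borel transform to a lower bound on $\mu([E-\e,E+\e])$ requires dominating the tail contribution $\int_{|x-E|\ge\e}\frac{\e}{(x-E)^2}d\mu(x)$, which can be arbitrarily large and is exactly what subordinacy theory does \emph{not} control. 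The paper instead invokes Germinet--Kiselev--Tcheremchantsev \cite[Prop.~2.1]{GKT}, which bounds $\mu([E-\e,E+\e])$ from below by $C_1\int_{E-\e/2}^{E+\e/2}\|T(N,0;x)\|^{-2}\,dx - C_2\e^M$ with $N=\lfloor\e^{-1-\sigma}\rfloor$; the negative error term $C_2\e^M$ is precisely what absorbs the tail. Two further points your sketch omits: (i) the transfer-matrix bound is assumed only at $E\in A$, so one needs a perturbative estimate (Simon's argument, \cite[Lemma~2.1]{DT}) to control $\|T(N,1;x)\|$ for $|x-E|\le\gamma$; (ii) the effective integration window is $\gamma\asymp N^{-1-\alpha}$, not $\e$, and this is the source of the extra factor $N^{-1-\alpha}$ that makes the exponent $N^{-1-3\alpha}=\e^{(1+\sigma)(1+3\alpha)}$. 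A direct JL-style accounting with $\|u\|_N^2\lesssim N^{2\alpha+1}$ and $N=\e^{-(1+\sigma)}$ produces $\e^{1+(1+\sigma)(2\alpha+1)}$, which is not the claimed $L$ (for small $\sigma$ this is $\approx 2+2\alpha$ vs.\ $1+3\alpha$). So the exponent numerology in your sketch does not actually close.

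\textbf{Part (b).} Here you propose a genuinely different route: rather than passing through the generalized fractal dimensions $D_\mu^\pm(q)$ and citing Tcheremchantsev's bound $\beta_{\delta_1}^\pm(p)\ge D_\mu^\pm(q)$ for $q\in(\tfrac{1}{1+p},1)$ \cite[Cor.~4.1]{T2}, you want to directly cover $A$ by $\gtrsim\e^{-d}$ boxes, sum the spectral mass $\gtrsim D\e^L$ per box, and invoke a Plancherel/uncertainty-type lower bound for $P_+(N,T)$. The numerology does come out right: $\beta(p)\ge(1+\tfrac1p)d - L/p$ after inserting into $\langle|X|^p\rangle\ge N^p P_+(N,T)$. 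But the Plancherel step --- ``mass $\gtrsim\e^{L-d}$ spread over $\gtrsim\e^{-d}$ separated intervals implies $P_+(T^d,T)\gtrsim\e^{L-d}$'' --- is the nontrivial heart of Guarneri-type dynamical bounds and cannot be waved through; making it rigorous is essentially re-deriving \cite[Thm.~4.5 and Cor.~4.1]{T2}. The paper sidesteps this by bounding the partition function $S_\mu(q,\e)=\sum_j\mu(I_j)^q$ from below using part~(a), deducing $D_\mu^\pm(q)\ge(\dim_B^\pm(A)-qL)/(1-q)$, and then letting $q\to\tfrac{1}{1+p}$ and $\sigma\to0$. Your route is more self-contained but substantially more work; the paper's route is shorter because it piggybacks on an existing general theorem.

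In short: the structure is right, part~(b) is a defensible alternative route, but part~(a) as sketched would not give the stated exponent and misses the error-term and perturbation ingredients that make the GKT-based proof work.
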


\begin{proof}
It was shown by Germinet, Kiselev and Tcheremchantsev
\cite[Proposition 2.1]{GKT} that for any $M>0$,
\begin{equation}\label{box1}
\mu([E-\e, E+\e])\ge C_1 \int_{E-\e/2}^{E+\e/2} \|T(N,0; x)\|^{-2}
\, dx - C_2 \e^M.
\end{equation}
Here $\e \in (0,1)$, $E \in [-B,B]$, $N = \lfloor \e^{-1-\sigma}
\rfloor$, $\sigma>0$, and the constants $C_1, C_2$ depend only on
$B,M,\sigma$.

Let $E \in A$. To bound the integral in \eqref{box1} from below,
we need an upper bound for $\|T(N,1; x)\|$ for $x$ close to $E$.
This can be accomplished using a perturbative argument of Simon;
compare \cite[Lemma~2.1]{DT}. Namely, if
$$
D(N,E) = \sup_{1 \le n, m \le N} \|T(n,m; E) \|,
$$
then for any $\delta \in \C$ and $1 \le n \le N$ we have
$$
\| T(n,1; E+\delta) \| \le D(N,E) \exp (D(N,E) |n|\delta).
$$
By assumption, $D(N,E) \le CN^\alpha$ for $E \in A$ and $N \ge 1$.
Thus,
$$
\| T(N,0; E+\delta) \| \le C N^\alpha \exp(CN^{1+\alpha} \delta),
$$
and for any $\delta$ with $|\delta| \le \gamma=1/2N^{-1-\alpha}$, we
get
\begin{equation}\label{box2}
\| T(N,1; E+\delta) \| \le C' N^\alpha
\end{equation}
with a uniform constant.

Since $N = \lfloor \e^{-1-\sigma} \rfloor$, $\e \in (0,1)$, we see
that $\gamma<\e/2$ for $\e$ small enough and therefore $[E-\gamma,
E+\gamma] \subset [E-\e/2, E+\e/2]$. Using \eqref{box1} and
\eqref{box2}, we find
\begin{align*}
\mu([E-\e, E+\e]) & \ge C_1 \int_{E-\gamma}^{E+\gamma} \|T(N,0;
x)\|^{-2} dx - C_2 \e^M \\
& \ge C_3 \gamma N^{-2\alpha} - C_2 \e^M \\
& = C_4 N^{-1-3\alpha} - C_2 \e^M \\
& = C_4 \e^{(1+\sigma)(1+3 \alpha)} -C_2\e^M
\end{align*}
with appropriate constants. Taking $M=(1+\sigma)(1+3 \alpha)+1$,
we obtain \eqref{box01}.

In order to prove the dynamical lower bound \eqref{box02}, we will
apply \cite[Corollary~4.1]{T2}; see also \cite{BGT}. In our
notation, this result reads
\begin{equation}\label{box31}
\beta_{\delta_1}^\pm (p) \ge D_\mu^\pm (q), \  q \in \Big(
\frac{1}{1+p}, 1 \Big),
\end{equation} where $D_\mu (q), \ 0< q<1$ are the generalized
fractal dimensions of the spectral measure; compare \cite{BGT,
BGT2}. One of the equivalent ways of defining these dimensions is
the following (cf.~\cite[Theorem 4.3]{T2}):
\begin{equation}\label{box4}
D_\mu^+ (q) = \limsup_{\e \to 0} \frac {\log S_\mu (q, \e)} {(q-1)
\log \e},
\end{equation}
and similarly for $D_\mu^-(q)$ with $\limsup$ replaced by
$\liminf$, where
$$
S_\mu (q, \e)=\sum_{j \in \Z} (\mu ([j \e, (j+1)\e))^q.
$$

To bound $S_\mu (q,\e)$ from below, we can follow the proof of
\cite[Theorem 4.5]{T2}. Let $I_j=[j\e, (j+1)\e)$. Denote by $J_A$
the set of $j \in \Z$ for which $I_j \cap A \ne \emptyset$. Thus,
for every $j \in J_A$, there exists $E_j \in [j\e, (j+1)\e)$ such
that $E_j \in A$. The bound \eqref{box01} implies for $\e$ small
enough,
$$
\mu(I_{j-1})+\mu(I_j)+\mu (I_{j+1}) \ge \mu([E_j-\e, E_j+\e]) \ge
D \e^L,
$$
for every $j \in J_A$, where $L=(1+\sigma)(1+3\alpha)$. Thus,
$$
B \equiv \sum_{j \in \Z} \left( \mu(I_{j-1})+\mu(I_j)+\mu
(I_{j+1}) \right)^q \ge D^q \e^{qL} \sum_{j \in J_A} 1.
$$
On the other hand, since $(a+b)^q \le a^q+b^q, \ a,b>0, \ q \in
(0,1)$, one has $B \le 3 S_\mu (q, \e)$. Therefore,
\begin{equation}\label{box5}
S_\mu (q,\e) \ge C(q) \e^{qL} \sum_{j \in J_A} 1 \equiv C(q) \e^{qL}
N(\e)
\end{equation}
with
$$
N(\e)=\#  \{ j \in \Z \ : \ I_j \cap A \ne \emptyset
\}.
$$
It follows from \eqref{box4}--\eqref{box5} that
\begin{equation}\label{box7}
D_\mu^\pm (q) \ge \frac {\dim_B^\pm (A)-qL}  {1-q}, \quad q \in (0,1).
\end{equation}
The bounds \eqref{box31} and \eqref{box7} imply, for any $q \in
(\frac{1}{1+p}, 1)$,
$$
\beta_{\delta_1}^\pm (p) \ge \frac{\dim_B^\pm (A)- qL} {1-q}.
$$
Since this holds for $L=(1+\sigma)(1+3 \alpha)$ with any $\sigma>0$,
letting $\sigma \to 0$ and $q \to \frac{1}{1+p}$, we obtain
\eqref{box02}.

Since
$$
\alpha_u^\pm=\lim_{p \to +\infty} \beta_{\delta_1}^\pm (p),
$$
the estimate \eqref{box03} follows.
\end{proof}

\noindent\textit{Remark.} One can improve the bounds \eqref{box01},
\eqref{box02} if one has a nontrivial (i.e., better than ballistic)
upper bound on $\alpha_u^+$. In this case \cite{T3}, \eqref{box1}
holds with $N = \lfloor \e^{-\rho} \rfloor$, where $\rho =
\alpha_u^+ + \sigma$, $\sigma>0$. Thus, \eqref{box01} holds with a
smaller value of $L$.

\bigskip

Let us now turn to the whole-line case and prove a result
analogous to \eqref{box03}. Consider the operator $H$ acting on
$\ell^2(\Z)$ as
$$
[Hu](n) = u(n+1) + u(n-1) + V(n)u(n)
$$
with bounded $V : \Z \to \R$, and choose $K$ so that $\sigma (H)
\subset [-K+1,K-1]$. The transfer matrices $T(n,m;z)$ are defined
in an analogous way.

For $N \ge 1$, recall that the time-averaged right outside
probabilities are given by
$$
P_r(N,T)= \frac{2}{T} \int_0^\infty e^{-\frac{2t}{T}}
\sum_{n > N} \left| \left\langle e^{-itH} \delta_1, \delta_n
\right\rangle \right|^2 \, dt.
$$
It was proved in \cite{DT2} that
\begin{equation}\label{box8}
C T^{-3} I_r (N,T)\le P_r (N,T)\le C e^{-cN} + C T^3 I_r (N,T),
\end{equation} where
$$
I_r (N,T)=\int_{-K}^K \| T (N,1; E + \tfrac{i}{T} ) \|^{-2} \, dE.
$$
Although not stated explicitly, this result from \cite{DT2} can be
extended easily to the half-line case \eqref{box001}. Thus, for
the time-averaged outside probabilities $P_+(N,T)$, we have
\begin{equation}\label{box9}
C T^{-3} I_+ (N,T)\le P_+(N,T)\le C e^{-cN} + C T^3 I_+ (N,T),
\end{equation}
where
$$
I_+(N,T) = \int_{-K}^K \|T(N,1; E+ \tfrac{i}{T} )\|^{-2} \, dE.
$$

\begin{coro}\label{full-line}
Suppose that $H$ is a discrete Schr\"odinger operator on the line
with bounded potential. Assume $A \subset [-B,B] \subset \R$ is
such that for $E \in A$ and $N \ge 1$,
\begin{equation}\label{box102}
\|T(n,m; E)\| \le C N^\alpha \text{ for all } m,n \in [1,N]
\end{equation}
with uniform constants $C,\alpha$. Then, for the initial state
$\delta_1$, we have that
$$
\alpha_u^\pm \ge \dim_B^\pm (A).
$$
\end{coro}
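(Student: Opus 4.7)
The plan is to reduce the corollary to Theorem~\ref{box}(b) via a half-line comparison. Associate to $V$ the half-line Dirichlet operator $H_+$ on $\ell^2(\Z_+)$ obtained by restricting $V$ to $\Z_+$. Since the transfer matrices $T(n,m;z)$ with $1 \le m,n$ are determined by $V|_{\Z_+}$ alone, they coincide for $H$ and $H_+$ on the positive half-line, so the polynomial bound \eqref{box102} is nothing other than \eqref{box101} for $H_+$. Consequently Theorem~\ref{box}(b), more precisely the bound \eqref{box03}, gives $\alpha_u^\pm \ge \dim_B^\pm(A)$ for the half-line dynamics generated by $H_+$ starting from $\delta_1$.

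It remains to transport this dynamical bound back to the whole line. Because the integrals $I_r(N,T)$ and $I_+(N,T)$ are built from exactly the same transfer matrices, $I_r(N,T) = I_+(N,T)$, and combining the lower estimate for $P_r$ in \eqref{box8} with the upper estimate for $P_+$ in \eqref{box9} (after solving the latter for $I_+$) yields a comparison of the form
$$
P_r(N,T) \ge c\, T^{-6} \bigl( P_+(N,T) - C e^{-cN} \bigr).
$$
With the choice $N = T^\alpha - 2$ and $\alpha > 0$, the exponential error is negligible at every polynomial scale, so any power-law lower bound on $P_+(T^\alpha-2,T)$ (along a subsequence or for all large $T$) passes to a power-law lower bound on $P_r(T^\alpha-2,T)$ with the decay exponent inflated by at most $6$.

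To finish, fix $\alpha < \dim_B^\pm(A)$. The half-line conclusion above gives $\alpha < \alpha_u^\pm(H_+)$, and since $S^\pm$ is nondecreasing in $\alpha$, this forces $S^\pm(\alpha) < \infty$ for $P_+$, meaning that $P_+(T^\alpha-2,T)$ fails to decay faster than every inverse power of $T$ in the appropriate $\limsup$/$\liminf$ sense. The displayed comparison transfers this non-superpolynomial behavior to $P_r(T^\alpha-2,T)$, and since $P(N,T) \ge P_r(N,T)$ the same property holds for the whole-line $P$. Hence $\alpha \le \alpha_u^\pm(H)$, and letting $\alpha \nearrow \dim_B^\pm(A)$ yields the corollary. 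The only mildly delicate step is the bookkeeping in this last paragraph, ensuring that the $\limsup$ ($+$) and $\liminf$ ($-$) versions of the argument are handled in parallel; everything else is a direct invocation of results already cited in the paper.
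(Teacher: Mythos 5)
Your proposal is correct and follows essentially the same route as the paper: reduce to the half-line operator via the locally defined transfer matrices, apply Theorem~\ref{box} to get $\alpha_{u,+}^\pm \ge \dim_B^\pm(A)$, then use $I_r(N,T)=I_+(N,T)$ together with \eqref{box8} and \eqref{box9} to transfer the power-law lower bound on $P_+(T^\alpha-2,T)$ to $P_r$ (losing only a factor $T^{-6}$), and conclude by letting $\alpha \nearrow \dim_B^\pm(A)$. The paper phrases the last step directly as $P_+(T^\alpha-2,T)\ge T^{-M}$ for large $T$ (resp.\ along a sequence of times for the $+$ case) rather than through the monotonicity of $S^\pm$, but this is the same argument.
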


\begin{proof}
Let $H_+$ be the half-line operator of the form \eqref{box001} with
potential $V_+$ given by $V_+(n)=V(n)$ for $n \ge 1$. The transfer
matrices are the same for both operators since they are defined
locally, and thus we have $I_r(N,T) = I_+(N,T)$. The condition
\eqref{box101} holds for the operator $H_+$ and Theorem~\ref{box}
therefore yields
$$
\alpha_{u,+}^\pm \ge \dim_B^\pm (A)
$$
for the corresponding dynamics.

Let us assume that $\dim_B^-(A) > 0$ (if $\dim_B^-(A) = 0$, the
result is trivially true) and show that $\alpha_u^- \ge
\dim_B^-(A)$. Take any $\alpha $ with $0 < \alpha < \dim_B^-(A)
\le \alpha_{u,+}^-$. Due to the definition of $\alpha_{u,+}^\pm$,
we have that
\begin{equation}\label{box10}
P_+ (T^\alpha - 2, T) \ge T^{-M}
\end{equation}
with some finite $M > 0$ for $T$ sufficiently large. Since
$I_r(N,T)=I_+(N,T)$, it follows from \eqref{box8} and \eqref{box9}
that
$$
P_r (N,T) \ge C T^{-3} I_+(N,T) \ge C T^{-6} \left( P_+(N,T)-C
e^{-cN} \right).
$$
It follows from \eqref{box10} that for $T$ sufficiently large,
$$
P_r (T^\alpha-2,T) \ge  C T^{-M-6}.
$$
Consequently, for the full time-averaged outside probabilities, we
find
\begin{align*}
P (T^\alpha -2 ,T) & = \frac{2}{T} \int_0^\infty e^{-\frac{2t}{T}}
\sum_{|n|>T^\alpha} \left| \left\langle e^{-itH} \delta_1,
\delta_n \right\rangle \right|^2 \, dt \\
& \ge C T^{-M-6}.
\end{align*}
It now follows directly from the definition of $\alpha_u^-$ that
$\alpha_u^- \ge \alpha$. Since this holds for every $\alpha <
\dim_B^-(A)$, the result follows. For $\alpha_u^+$, the proof is
the completely analogous. (The bound \eqref{box10} then holds for
some sequence of times.)
\end{proof}

\noindent\textit{Remarks.} (a) One can prove a similar result under
the condition
$$
\|T(n,m; E)\| \le C N^\alpha \text{ for all } m,n \in [-N,-1],
$$
where $E \in A, \ N \ge 1$.
\\[2mm]
(b) This establishes Theorem~\ref{thm3} as stated in the
Introduction since it has been shown that in the Fibonacci case,
the transfer matrices are polynomially bounded for all energies in
the spectrum with uniform constants $C, \alpha$.
 See Iochum-Testard \cite{it} for the case $\theta = 0$ and
Damanik-Lenz \cite{dl2} for the case of general $\theta$.

\begin{appendix}

\section{The Hyperbolicity of the Trace Map at Large Coupling and
Some of its Consequences}

\subsection{Description of the Trace Map} The main tool that we are
using here is the \textit{trace map}. It was originally introduced
in \cite{K,kkt}; see also \cite{su} for proofs of the results
described below. Let us recall that the numbers $x_k =
x_k(E,\lambda)$ introduced in Section~\ref{bandsec} satisfy the
recursion relation
\begin{equation}\label{apptm}
x_{k+1}=x_k x_{k-1} - x_{k-2},
\end{equation}
with initial conditions $x_{-1} = 2$, $x_0 = E$, $x_1 = E -
\lambda$, and the invariance relation
\begin{equation}\label{appinv}
x_{k+1}^2 + x_k^2 + x_{k-1}^2 - x_{k+1} x_k x_{k-1} = 4+\lambda^2
\end{equation}
for every $k\in \Z_+$. Because of \eqref{apptm}, it is natural to
consider the so-called \textit{trace map},
$$
T : \R^3 \to \R^3, \; T(x,y,z) = (xy-z,x,y).
$$
The sequence $\left\{x_1, x_2, x_3, \ldots \right\}$ can be
considered as the sequence of first coordinates of points in the
trace map orbit having initial condition $\left( x_1, x_0,
x_{-1}\right)$.

By \eqref{appinv}, the following function is invariant under the
action of $T$:
$$
I = x^2 + y^2 + z^2 - x y z - 4.
$$
In other words, $T$ preserves the family of cubic surfaces 
$$
\mathbb{S}_{I}=\{(x,y,z) \in \R^3 : x^2 + y^2 +z^2 - xyz - 4 = I \}.
$$
The surface $\mathbb{S}_0$ is called the \textit{Cayley cubic}. Denote by
$l_{\lambda}$ the line
$$
l_{\lambda} = \{ (E-\lambda, E, 2) : E \in \R \}.
$$
It is easy to check that $l_{\lambda} \subset \mathbb{S}_{\lambda^2}$.

The following result, proved in \cite{su}, characterizes the
spectrum of the Fibonacci Hamiltonian in terms of trace map dynamics
and therefore establishes an important and fruitful connection
between spectral and dynamical issues in this context.

\begin{theorem}\label{spectrum}
The energy $E$ belongs to the spectrum $\Sigma_\lambda$ of the
Fibonacci Hamiltonian if and only if the positive semiorbit of the
point $(E-\lambda, E, 2)$ under iterates of the trace map $T$ is
bounded.
\end{theorem}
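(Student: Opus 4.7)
The plan is to identify the trace map semiorbit of the point $(E-\lambda, E, 2)$ with the triples formed by consecutive terms of the sequence $\{x_k\}$ defined in \eqref{tracemap}, and then invoke the already-recorded characterization \eqref{suto2}. Since $x_1 = E-\lambda$, $x_0 = E$, and $x_{-1} = 2$, the initial triple is exactly $(x_1, x_0, x_{-1})$, which suggests reading off the orbit coordinate-wise from the sequence.

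First I would verify by induction on $k \ge 0$ that
$$
T^{k}(x_1, x_0, x_{-1}) = (x_{k+1},\, x_k,\, x_{k-1}).
$$
The base case $k=0$ is immediate. For the inductive step, applying $T(x,y,z) = (xy - z, x, y)$ to $(x_{k+1}, x_k, x_{k-1})$ yields $(x_{k+1} x_k - x_{k-1},\, x_{k+1},\, x_k)$, whose first coordinate equals $x_{k+2}$ by the trace map recursion \eqref{tracemap}. Thus the semiorbit of $(E-\lambda, E, 2)$ coincides with the sliding window of consecutive triples of the sequence $\{x_k\}_{k \ge -1}$.

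Consequently, $\{T^{k}(E-\lambda, E, 2)\}_{k \ge 0}$ is bounded in $\R^3$ if and only if each coordinate sequence is bounded, which (since $x_{-1}$ and $x_0$ are just fixed elements of the same sequence) is equivalent to boundedness of $\{x_k\}$. By \eqref{suto2}, this in turn is equivalent to $E \in \Sigma_\lambda$, which completes the equivalence asserted in the theorem.

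The main non-obvious content here is not in the trace map reformulation, which is essentially a change of variables, but in \eqref{suto2} itself. Its proof in \cite{su} relies on the descending covering chain \eqref{suto1}, together with the observation that each $\sigma_k = \{|x_k| \le 2\}$ is the spectrum of a periodic Schr\"odinger operator approximating $H$ (with $\phi^{-1}$ replaced by $F_{k-1}/F_k$), so that $\Sigma_\lambda = \bigcap_{k\ge 1}(\sigma_k \cup \sigma_{k+1})$ by a standard approximation/Hausdorff-limit argument, while the identification of this intersection with the set of energies producing a bounded trace map orbit uses the invariant \eqref{invariant} to rule out escape to infinity whenever the sequence fails to leave $[-2,2]$ in triples. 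Given \eqref{suto2} as a black box from the excerpt, the proof of Theorem~\ref{spectrum} reduces to the brief induction above.
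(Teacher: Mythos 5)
Your proposal is correct and matches the paper's treatment: the paper states Theorem~\ref{spectrum} as a known result of S\"ut\H{o} whose content is exactly the identification $T^k(x_1,x_0,x_{-1})=(x_{k+1},x_k,x_{k-1})$ combined with the characterization \eqref{suto2}, which is the same reduction you carry out. Your short induction just makes explicit the ``change of variables'' that the paper leaves implicit, so there is nothing to add.
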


\subsection{Hyperbolicity of the Trace Map for Large \boldmath$\lambda$}

Denote by $f_{\lambda}$ the restriction of the trace map to the invariant
surface $\mathbb{S}_{\lambda^2}$. That
is, $f_{\lambda} : \mathbb{S}_{\lambda^2} \to \mathbb{S}_{\lambda^2}$ and
$f_{\lambda} =
T|_{\mathbb{S}_{\lambda^2}}$. Denote by $\Omega_{\lambda}$ the set of points in
$\mathbb{S}_{\lambda^2}$ whose
full orbits under $f_{\lambda}$ are bounded.

Let us recall that an invariant set $\Lambda$ of a diffeomorphism
$f:M\to M$ is {\it locally maximal} if there exists a neighborhood
$U(\Lambda)$ such that
$$
\Lambda=\bigcap_{n \in \Z}f^n(U).
$$
An invariant closed set $\Lambda$ of a diffeomorphism $f : M \to
M$ is \textit{hyperbolic} if there exists a splitting of the
tangent space $T_x M = E^s_x \oplus E^u_x$ at every point $x\in
\Lambda$ such that it is invariant under $Df$, and $Df$
exponentially contracts vectors from the stable subspaces
$\{E^s_x\}$ and exponentially expands vectors from the unstable
subspaces $\{E^u_x\}$.

See \cite{H} for a detailed survey of hyperbolic dynamics and an
extensive list of references.

Casadgli proved the following result \cite{C}.

\begin{theorem}\label{Casdagli}
For every $\lambda \ge 16$, the set $\Omega_{\lambda}$ is a locally maximal
invariant hyperbolic set of
$f_{\lambda} : \mathbb{S}_{\lambda^2} \to \mathbb{S}_{\lambda^2}$.
\end{theorem}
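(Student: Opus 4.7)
The plan is to exploit the explicit structure of the trace map $T(x,y,z) = (xy - z, x, y)$ together with the fact that, for large $\lambda$, the invariant surface $\mathbb{S}_{\lambda^2}$ departs drastically from the Cayley cubic $\mathbb{S}_0$ (on which the dynamics has a one-parameter family of periodic orbits and is far from hyperbolic). First, I would work entirely inside $\mathbb{S}_{\lambda^2}$ and note that $\Omega_\lambda$ is automatically compact and $f_\lambda$-invariant, since a full orbit is bounded iff both the forward and backward semiorbits are. The derivative of $T$ is
\[
DT = \begin{pmatrix} y & x & -1 \\ 1 & 0 & 0 \\ 0 & 1 & 0 \end{pmatrix},
\]
so along orbits with large $|x|$ or $|y|$ one already sees strong expansion in one direction and contraction in another; the task is to make this precise on $\Omega_\lambda$.

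The main step is the construction of an invariant cone field on a neighborhood of $\Omega_\lambda$ in $\mathbb{S}_{\lambda^2}$. I would proceed as follows. Using the invariance relation \eqref{appinv}, together with the fact from \eqref{critical} that for $\lambda > 4$ no three consecutive iterates can all have first coordinate in $[-2,2]$, I would decompose a neighborhood of $\Omega_\lambda$ into finitely many ``boxes'' indexed by the symbol telling which of $|x_k|, |x_{k+1}|, |x_{k+2}|$ is large. For $\lambda \ge 16$, these boxes are geometrically well-separated, and the transitions between them under $f_\lambda$ are governed by a subshift of finite type. On each box, I would write down an explicit unstable cone $C^u$ (essentially aligned with the direction in which the large coordinate amplifies errors) and a stable cone $C^s$ (its complement), and then verify two things by direct computation: that $Df_\lambda(C^u) \subset \mathrm{int}(C^u)$ at every point of the box (with the image of $Df_\lambda^{-1}$ dually preserving $C^s$), and that vectors in $C^u$ are expanded by a factor strictly greater than $1$, vectors in $C^s$ contracted by a factor strictly less than $1$. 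The threshold $\lambda \ge 16$ should enter exactly as the quantitative condition that makes both the invariance and the strict expansion hold uniformly.

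Granted the cone field and uniform expansion/contraction estimates, hyperbolicity of $\Omega_\lambda$ follows from the standard cone criterion: the stable and unstable bundles are obtained as intersections of iterated cones, they are $Df_\lambda$-invariant, and the norm estimates immediately give the required exponential contraction and expansion. Local maximality is then obtained by taking $U$ to be the union of the boxes, slightly thickened; any point whose full orbit remains in $U$ is trapped in the symbolic coding and therefore has bounded orbit, hence lies in $\Omega_\lambda$, so $\Omega_\lambda = \bigcap_{n\in\Z} f_\lambda^n(U)$.

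The hard part is the explicit cone construction and the verification that the quantitative estimates actually close up for $\lambda \ge 16$. The calculation with $DT$ is only three-dimensional, but one must carefully project to the tangent plane of $\mathbb{S}_{\lambda^2}$ (whose normal varies with the point), and the threshold $16$ is delicate — it has to be chosen so that both the cones remain strictly invariant under $Df_\lambda$ on each box \emph{and} the expansion/contraction ratios stay bounded away from $1$ uniformly over $\Omega_\lambda$. Everything else (compactness, local maximality, conjugacy of the dynamics on $\Omega_\lambda$ to a subshift of finite type coming from the combinatorics of Lemma~\ref{order}) follows from this quantitative core.
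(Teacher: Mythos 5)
The first thing to note is that the paper does not prove this statement at all: it is quoted verbatim as Casdagli's theorem from \cite{C}, so there is no internal proof to compare against. What you have written is a plan for reproving Casdagli's result from scratch, and in outline it does match the strategy of \cite{C}: a Markov-type decomposition of a neighborhood of $\Omega_\lambda$ into finitely many boxes governed by which of the coordinates is large (driven by the invariant \eqref{appinv} and the exclusion \eqref{critical}), an explicit invariant cone field on each box, the cone criterion for hyperbolicity, and local maximality from the symbolic coding. So the architecture is right.

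The gap is that the proposal contains none of the content that makes the theorem true. Every step you describe as ``verify by direct computation'' is the theorem: the choice of the boxes, the choice of the cone apertures, the projection of $DT$ onto the tangent planes of $\mathbb{S}_{\lambda^2}$ (whose normal degenerates near the singularities of the Cayley cubic and varies over the region where the dynamics is least expanding, namely where all coordinates are of order $2$), and the uniform bounds showing $Df_\lambda(C^u)\subset\mathrm{int}(C^u)$ with expansion factor bounded away from $1$. These estimates occupy the bulk of Casdagli's paper and are where the threshold $\lambda\ge 16$ actually arises; nothing in your outline indicates how they would be obtained or why they close up at that value. Two smaller points: compactness of $\Omega_\lambda$ is not ``automatic'' --- it requires the escape lemma (if some $|x_k|>2$ and $|x_{k+1}|\ge|x_k|$, the orbit escapes superexponentially), which is what shows the set of unbounded orbits is open; and the local maximality argument needs the converse direction, namely that an orbit trapped in the thickened boxes forever cannot drift out of the bounded region, which again rests on the same escape estimate. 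As it stands the proposal is a correct roadmap to an external proof, not a proof.
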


\subsection{Some Properties of Locally Maximal Hyperbolic Invariant
Sets of Surface Diffeomorphisms}

Consider a locally maximal invariant transitive hyperbolic set
$\Lambda \subset M$, $\dim M = 2$, of a diffeomorphism $f \in
\mathrm{Diff}^r(M)$, $r\ge 1$. We have $\Lambda = \cap_{n \in \Z}\,
f^n (U(\Lambda))$ for some neighborhood $U(\Lambda)$. Assume also
that $\dim E^u = \dim E^s = 1$.

Let us gather several results in this general context that we will eventually
specialize to the case where
$\Lambda$ is given by $\Omega_{\lambda}$ for $f_{\lambda} : \mathbb{S}_{\lambda^2}
\to \mathbb{S}_{\lambda^2}$,
$\lambda\ge 16$.
\\[2mm]
\textbf{I. Stability.} There is a neighborhood $\mathcal{U}\subset
\mathrm{Diff}^1(M)$ of the map $f$ such that for every $g\in
\mathcal{U}$, the set $\Lambda_g = \cap_{n\in \Z}\, g(U(\Lambda))$ is
a locally maximal invariant hyperbolic set of $g$. Moreover,
there is a homeomorphism $h : \Lambda \to \Lambda_g$ that conjugates
$f|_{\Lambda}$ and $g|_{\Lambda_g}$, that is, the following diagram
commutes:
$$
\CD \Lambda @>{f|_{\Lambda}}>>\Lambda \\
@V{h}VV@VV{h}V\\
\Lambda_g @>g|_{\Lambda_g}>>\Lambda_g\endCD
$$
\textbf{II. Invariant Manifolds.} For $x\in \Lambda$ and small
$\varepsilon > 0$, consider the local stable and unstable sets
$$
W^s_{\varepsilon}(x) = \{w \in M :  d(f^n(x), f^n(w)) \le
\varepsilon \ \ \text{ for all}\ \  n\ge 0\},
$$
$$
W^u_{\varepsilon}(x) = \{ w \in M :  d(f^n(x), f^n(w))\le
\varepsilon \ \ \text{\rm for all}\ \  n\le 0\}.
$$
If $\varepsilon>0$ is small enough, then these are embedded
$C^r$-disks with $T_x W^s_{\varepsilon}(x) = E^s_x$ and $T_x
W^u_{\varepsilon} (x) = E^u_x$. Define the (global) stable and
unstable sets as
$$
W^s(x)=\cup_{n\in \Z_+}f^{-n}(W^s_{\varepsilon}(x)), \ \ \
W^u(x)=\cup_{n\in \Z_+}f^{n}(W^u_{\varepsilon}(x)).
$$
Define also
$$
W^s(\Lambda)=\cup_{x\in \Lambda}W^s(x)\ \ \ \text{\rm and}\ \ \
W^u(\Lambda)=\cup_{x\in \Lambda}W^u(x).
$$
\textbf{III. Invariant Foliations.} A stable foliation for $\Lambda$
is a foliation $\mathcal{F}^s$ of a neighborhood of $\Lambda$ such
that
\begin{itemize}
\item[(a)] for each $x\in \Lambda$, $\mathcal{F}(x)$, the leaf
containing $x$, is tangent to $E^s_x$;
\item[(b)] for each $x\in
\Lambda$, sufficiently near $\Lambda$, $f(\mathcal{F}^s(x))\subset
\mathcal{F}^s(f(x))$.
\end{itemize}
An unstable foliation $\mathcal{F}^u$ can be defined in a similar
way.

For a locally maximal hyperbolic set $\Lambda\subset M$ of a
$C^1$-diffeomorphism $f : M \to M$, $\dim M = 2$, stable and
unstable $C^0$ foliations with $C^1$-leaves can be constructed
\cite{M}. In the case of $C^2$-diffeomorphisms,  $C^1$ invariant
foliations exist (see \cite{PT}, Theorem 8 in Appendix 1).
\\[2mm]
\textbf{IV. Local Hausdorff Dimension and Box Counting Dimension.}
Consider, for $x\in \Lambda$ and small $\varepsilon>0$, the set
$W^u_{\varepsilon}(x)\cap \Lambda$. The Hausdorff dimension of this
set does not depend on $x\in \Lambda$ and $\varepsilon>0$,
and coincides with its box counting dimension (see \cite{MM,T}):
$$
\text{\rm dim}_HW^u_{\varepsilon}(x)\cap \Lambda= \text{\rm
dim}_BW^u_{\varepsilon}(x)\cap \Lambda.
$$
In a similar way,
$$
\text{\rm dim}_HW^s_{\varepsilon}(x)\cap \Lambda= \text{\rm
dim}_BW^s_{\varepsilon}(x)\cap \Lambda.
$$
Denote $h^s=\text{\rm dim}_HW^s_{\varepsilon}(x)\cap \Lambda$ and
$h^u=\text{\rm dim}_HW^u_{\varepsilon}(x)\cap \Lambda$. We will call
$h^s$ and $h^u$ the \textit{local stable} and \textit{unstable Hausdorff
dimensions} of $\Lambda$, respectively.
\\[2mm]
\textbf{V. Global Hausdorff Dimension.} Moreover, the Hausdorff
dimension of $\Lambda$ is equal to its box counting dimension and
$$
\dim_H \Lambda = \dim_B \Lambda=h^s+h^u;
$$
see \cite{MM,PV}.
\\[2mm]
\textbf{VI. Continuity of the Hausdorff Dimension.} The local
Hausdorff dimensions $h^s(\Lambda)$ and $h^u(\Lambda)$ depend
continuously on $f:M\to M$ in the $C^1$-topology; see \cite{MM,PV}.
Therefore, $\dim_H \Lambda_f = \dim_B \Lambda_f = h^s(\Lambda_f) +
h^u(\Lambda_f)$ also depends continuously on $f$ in the
$C^1$-topology. Moreover, for $r \ge 2$ and $C^r$-diffeomorphisms $f
: M \to M$, the Hausdorff dimension of a hyperbolic set $\Lambda_f$
is a $C^{r-1}$ function of $f$; see \cite{M}.
\\[3mm]
\textit{Remark.} For hyperbolic sets in dimension greater than two,
most of these properties do not hold in general; see \cite{P} for
more details.

\subsection{Implications for the Trace Map and the Spectrum}

Due to Theorem~\ref{Casdagli}, the properties I--VI can all be applied to the
hyperbolic set $\Omega_{\lambda}$
of the trace map $f_{\lambda} : \mathbb{S}_{\lambda^2} \to \mathbb{S}_{\lambda^2}$
for every $\lambda \ge 16$.
One can extract the following statement from the material in
\cite[Section~2]{C}.

\begin{lemma}
For $\lambda\ge 16$ and every $x\in \Omega_{\lambda}$, the stable
manifold $W^s(x)$ intersects the line $l_{\lambda}$ transversally.
\end{lemma}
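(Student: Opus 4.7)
The plan is to reduce transversality to the statement that the tangent direction to $l_\lambda$ is not contained in the stable subspace at any intersection point, and to derive a contradiction with Lemma~\ref{kkllemma} if it were.

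First I would parameterize $l_\lambda$ by $E\mapsto (E-\lambda, E, 2)$, so that the tangent vector at every point of $l_\lambda$ is the constant vector $v_0 = (1,1,0) \in T\mathbb{S}_{\lambda^2}$. Since both $l_\lambda$ and $W^s(x)$ are one-dimensional curves inside the two-dimensional surface $\mathbb{S}_{\lambda^2}$, transversality of their intersection at a point $y$ is equivalent to showing $v_0 \notin E^s_y$.

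Next I would identify the iterates of $v_0$ under $Df_\lambda$. Because $f_\lambda^n$ maps $l_\lambda$ onto the curve $E\mapsto (x_{n+1}(E), x_n(E), x_{n-1}(E))$, the chain rule gives
\[
Df_\lambda^n(y)\cdot v_0 \;=\; \bigl(x'_{n+1}(E),\, x'_n(E),\, x'_{n-1}(E)\bigr),
\]
where primes denote derivatives with respect to $E$.

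Now suppose for contradiction that $v_0 \in E^s_y$ for some $y = (E-\lambda, E, 2) \in l_\lambda \cap W^s(x)$ with $x \in \Omega_\lambda$. By the defining contraction property of $E^s_y$, $\|Df_\lambda^n(y)\cdot v_0\| \to 0$ exponentially, and in particular $|x'_n(E)| \to 0$. On the other hand, $y \in W^s(x)$ has bounded forward orbit, so Theorem~\ref{spectrum} places $E$ in $\Sigma_\lambda$. By \eqref{suto2}, $E$ lies in $\sigma_n \cup \sigma_{n+1}$ for every $n$, and the inductive description in Lemma~\ref{order} provides a nested descending sequence of bands $I_{n_0} \supset I_{n_1} \supset \cdots$ all containing $E$, with successive level jumps $n_{j+1}-n_j \in \{1,2\}$ according to whether the descent is of type A or B. Applying Lemma~\ref{kkllemma} one link at a time along this chain (valid since $\lambda \ge 16 \ge 8$) yields
\[
|x'_{n_k}(E)| \;\ge\; S_l(\lambda)^{\,k}\, |x'_{n_0}(E)|,
\]
so $|x'_{n_k}(E)| \to \infty$. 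This contradicts $|x'_n(E)|\to 0$, whence $v_0 \notin E^s_y$ and the intersection is transversal.

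The main obstacle will be the bookkeeping needed to chain the type A (one-step) and type B (two-step) estimates from Lemma~\ref{kkllemma} into a single exponential lower bound, and the verification that every $E \in \Sigma_\lambda$ in fact admits such a coherent descending chain of bands rather than merely belonging to some band at every level. Both points are essentially combinatorial consequences of Lemma~\ref{order} together with \eqref{suto2}, and are routine once the nested hierarchy is unpacked. The existence of the intersection itself (as opposed to just its transversality) is a separate issue handled by a standard inclination-lemma argument exploiting that $f_\lambda^{-n}(l_\lambda)$ stretches along the stable foliation as $n \to \infty$.
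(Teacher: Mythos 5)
The paper does not actually prove this lemma --- it is ``extracted'' from Section~2 of Casdagli \cite{C}, where it comes out of the construction of a Markov partition and invariant cone fields for $f_\lambda$. Your argument is therefore a genuinely different and essentially self-contained route: you reduce transversality of two curves in the surface $\mathbb{S}_{\lambda^2}$ to the statement $v_0=(1,1,0)\notin T_yW^s(x)$, observe via the chain rule that $Df_\lambda^n(y)v_0=(x_{n+1}'(E),x_n'(E),x_{n-1}'(E))$, and play the exponential decay of stable tangent vectors against the exponential growth $|x_{n_k}'(E)|\ge S_l(\lambda)^k$ supplied by Lemma~\ref{kkllemma} along the nested band chain of $E\in\Sigma_\lambda$. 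This works: the band chain exists for every $E\in\Sigma_\lambda$ by \eqref{suto1}, \eqref{critical} and Lemma~\ref{order} (exactly the bookkeeping already used in the proof of Theorem~\ref{thm5}), the chain starts from $|x_0'(E)|=|x_1'(E)|=1$, and $S_l(\lambda)\ge 3$ for $\lambda\ge 8$, so the contradiction is genuine. One small repair: an intersection point $y\in l_\lambda\cap W^s(x)$ need not lie in $\Omega_\lambda$, so there is no subspace $E^s_y$; you must argue with $T_yW^s(x)$, whose forward contraction follows from $f^N(y)\in W^s_\varepsilon(f^N(x))$ for some $N$ together with the uniform contraction of tangent directions to local stable manifolds.

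There is, however, one genuine gap. The lemma (and its subsequent use, where $W^s(\Omega_\lambda)\cap l_\lambda$ is identified via stable holonomy with a set $W^u_\varepsilon(x)\cap\Omega_\lambda$) requires that $W^s(x)\cap l_\lambda$ be \emph{nonempty} for every $x\in\Omega_\lambda$, and your closing appeal to the inclination lemma does not deliver this: that lemma says forward iterates of a disk transversal to one stable manifold accumulate on unstable manifolds, not that $l_\lambda$ meets every stable leaf. The crossing statement really does come from Casdagli's rectangle structure (the line traverses each Markov rectangle from one unstable boundary component to the other), so for that half of the assertion some appeal to \cite{C} remains necessary.
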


The existence of a $C^1$-foliation $\mathcal{F}^s$ allows us to
locally consider the set $W^s(\Omega_{\lambda})\cap l_{\lambda}$ as
a $C^1$-image of a set $W^u_{\varepsilon}(x)\cap\Omega_{\lambda}$.
Therefore, we obtain the following consequences for the spectrum of
the Fibonacci Hamiltonian.

\begin{theorem}
For $\lambda\ge 16$, the following statements hold:
\\
{\rm (a)} The spectrum $\Sigma_\lambda$ depends continuously on
$\lambda$ in the Hausdorff metric.
\\
{\rm (b)} We have $\dim_H (\Sigma_\lambda) = \dim_B
(\Sigma_\lambda)$.
\\
{\rm (c)} For every small $\varepsilon>0$ and every $E \in
\Sigma_\lambda$, we have
$$
\dim_H \left( (E-\varepsilon, E+\varepsilon) \cap \Sigma_\lambda
\right) = \dim_H (\Sigma_\lambda)
$$
and
$$
\dim_B \left( (E-\varepsilon, E+\varepsilon) \cap \Sigma_\lambda
\right) = \dim_B (\Sigma_\lambda).
$$
{\rm (d)} The Hausdorff dimension $\dim_H (\Sigma_\lambda)$ is a
$C^{\infty}$-function of $\lambda$.
\end{theorem}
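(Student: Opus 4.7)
The plan is to identify $\Sigma_\lambda$ with a transverse one-dimensional slice of the hyperbolic set $\Omega_\lambda$ and then transport the well-developed dimension theory for slices of hyperbolic sets on surfaces (properties IV--VI) back to the spectrum. Concretely, Theorem~\ref{spectrum} says $E\in\Sigma_\lambda$ iff the forward orbit of $p_\lambda(E):=(E-\lambda,E,2)$ under $f_\lambda$ is bounded, which by Theorem~\ref{Casdagli} is equivalent to $p_\lambda(E)\in W^s(\Omega_\lambda)$. Since $E\mapsto p_\lambda(E)$ is a real-analytic parameterization of $l_\lambda\subset\mathbb{S}_{\lambda^2}$, this gives a bi-Lipschitz identification
$$
\Sigma_\lambda \;\longleftrightarrow\; l_\lambda\cap W^s(\Omega_\lambda).
$$

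The central technical step is to build, for each $x_0\in\Omega_\lambda$, a local holonomy map $\Pi_{x_0}$ sending a small arc of $l_\lambda$ through $p_\lambda(E)$ onto a small arc of $W^u_\varepsilon(x_0)$ by sliding along the leaves of the $C^1$ stable foliation $\mathcal F^s$ supplied by III. The transversality lemma immediately above is exactly what ensures $\Pi_{x_0}$ is a $C^1$ diffeomorphism, and by construction it maps $l_\lambda\cap W^s(\Omega_\lambda)$ onto $W^u_\varepsilon(x_0)\cap\Omega_\lambda$. Because Hausdorff and box counting dimensions are $C^1$-invariant and property~IV tells us that the latter set has coinciding Hausdorff and box counting dimensions equal to $h^u(\Omega_\lambda)$, I conclude
$$
\dim_H\!\bigl(l_\lambda\cap W^s(\Omega_\lambda)\cap U\bigr) \;=\; \dim_B\!\bigl(l_\lambda\cap W^s(\Omega_\lambda)\cap U\bigr) \;=\; h^u(\Omega_\lambda)
$$
for every sufficiently small neighborhood $U$ of $p_\lambda(E)$. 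Since $\Sigma_\lambda$ is compact, a finite cover by such holonomy charts together with the countable stability of $\dim_H$ and the monotonicity of $\dim_B$ yields parts (b) and (c) simultaneously, with the common value of $\dim_H(\Sigma_\lambda)=\dim_B(\Sigma_\lambda)$ equal to $h^u(\Omega_\lambda)$.

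For (a), the plan is to invoke structural stability (property~I): as $\lambda$ varies, $\Omega_\lambda$ and its stable lamination vary continuously, while $l_\lambda$ itself depends smoothly on $\lambda$. Transversality being an open condition, the intersection $l_\lambda\cap W^s(\Omega_\lambda)$ deforms continuously in the Hausdorff metric, which translates into Hausdorff continuity of $\Sigma_\lambda$. For (d), I would first exhibit $\{f_\lambda\}$ as a $C^\infty$ family of diffeomorphisms of a single reference surface: the cubics $\mathbb{S}_{\lambda^2}$ fit together into a smooth codimension-one foliation of an open set in $\mathbb R^3$ (the gradient of $x^2+y^2+z^2-xyz$ is nonzero on the relevant region), hence there is a smooth trivialization that conjugates $f_\lambda$ to a $C^\infty$-in-$\lambda$ family of diffeomorphisms on a fixed $\mathbb{S}_{\lambda_0^2}$. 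Property~VI then gives $h^u(\Omega_\lambda)\in C^\infty(\lambda)$, and the identity from part~(b) finishes (d).

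The main obstacle I expect is precisely the bookkeeping in (d): property~VI is stated for a parameterized family of diffeomorphisms of a fixed manifold, while $f_\lambda$ a priori lives on the $\lambda$-dependent cubic $\mathbb{S}_{\lambda^2}$, so I must carefully produce a smoothly varying change of coordinates. A secondary subtlety is that the local dimension coincidence in property~IV requires the hyperbolic set to be locally maximal and transitive; local maximality is asserted in Theorem~\ref{Casdagli}, but I would have to either check transitivity of $\Omega_\lambda$ directly from Casdagli's description or reduce to a basic set in a spectral decomposition and argue that all basic pieces make contact with $l_\lambda$.
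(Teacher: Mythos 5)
Your proposal matches the paper's argument: the paper likewise invokes the transversality of $W^s(x)$ with $l_\lambda$ and the $C^1$ stable foliation $\mathcal{F}^s$ to realize $l_\lambda \cap W^s(\Omega_\lambda)$ locally as a $C^1$ image of $W^u_\varepsilon(x)\cap\Omega_\lambda$, then transports properties IV--VI to the spectrum via the parameterization $E\mapsto(E-\lambda,E,2)$. The two subtleties you flag at the end --- transitivity of $\Omega_\lambda$, and exhibiting $f_\lambda$ as a smooth family on a fixed reference surface so that property VI applies for part (d) --- are genuine, and the paper also passes over them silently by simply asserting that I--VI apply to $\Omega_\lambda$ once Casdagli's theorem is in hand.
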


In particular, part (b) establishes Theorem~\ref{dimeqthm} as
formulated in the Introduction.
\end{appendix}

\end{document}